\newcommand{\dirac}[1]{| #1 \rangle}
\newcommand{\li}[1]{\overline{#1}}
\newtheorem{theorem}{Theorem}
\newtheorem{lemma}{Lemma}
\newtheorem{corollary}{Corollary}
\newtheorem{proposition}{Proposition}
\newtheorem{definition}{Definition}
\begin{document}
%
% paper title
% can use linebreaks \\ within to get better formatting as desired
\title{A NEW APPROACH TO CODEWORD STABILIZED QUANTUM CODES USING THE ALGEBRAIC STRUCTURE OF MODULES}

% author names and affiliations
% use a multiple column layout for up to three different
% affiliations
\author{\IEEEauthorblockN{DOUGLAS FREDERICO GUIMAR\~AES SANTIAGO\\ and GERALDO SAMUEL SENA OTONI}
\IEEEauthorblockA{Instituto de Ci\^encia e Tecnologia - ICT\\
Diamantina, Brasil\\
Email: douglas.santiago@ict.ufvjm.edu.br}}

% conference papers do not typically use \thanks and this command
% is locked out in conference mode. If really needed, such as for
% the acknowledgment of grants, issue a \IEEEoverridecommandlockouts
% after \documentclass

% for over three affiliations, or if they all won't fit within the width
% of the page, use this alternative format:
% 
%\author{\IEEEauthorblockN{Michael Shell\IEEEauthorrefmark{1},
%Homer Simpson\IEEEauthorrefmark{2},
%James Kirk\IEEEauthorrefmark{3}, 
%Montgomery Scott\IEEEauthorrefmark{3} and
%Eldon Tyrell\IEEEauthorrefmark{4}}
%\IEEEauthorblockA{\IEEEauthorrefmark{1}School of Electrical and Computer Engineering\\
%Georgia Institute of Technology,
%Atlanta, Georgia 30332--0250\\ Email: see http://www.michaelshell.org/contact.html}
%\IEEEauthorblockA{\IEEEauthorrefmark{2}Twentieth Century Fox, Springfield, USA\\
%Email: homer@thesimpsons.com}
%\IEEEauthorblockA{\IEEEauthorrefmark{3}Starfleet Academy, San Francisco, California 96678-2391\\
%Telephone: (800) 555--1212, Fax: (888) 555--1212}
%\IEEEauthorblockA{\IEEEauthorrefmark{4}Tyrell Inc., 123 Replicant Street, Los Angeles, California 90210--4321}}

% use for special paper notices
%\IEEEspecialpapernotice{(Invited Paper)}

% make the title area
\maketitle

\begin{abstract}
%\boldmath
In this work, we study the Codeword Stabilized Quantum Codes (CWS codes) a generalization of the stabilizers quantum codes using a new approach, the algebraic structure of modules, a generalization of linear spaces. We show then a new result that relates CWS codes with stabilizer codes generalizing results in the literature.
\end{abstract}
% IEEEtran.cls defaults to using nonbold math in the Abstract.
% This preserves the distinction between vectors and scalars. However,
% if the conference you are submitting to favors bold math in the abstract,
% then you can use LaTeX's standard command \boldmath at the very start
% of the abstract to achieve this. Many IEEE journals/conferences frown on
% math in the abstract anyway.

% no keywords

% For peer review papers, you can put extra information on the cover
% page as needed:
% \ifCLASSOPTIONpeerreview
% \begin{center} \bfseries EDICS Category: 3-BBND \end{center}
% \fi
%
% For peerreview papers, this IEEEtran command inserts a page break and
% creates the second title. It will be ignored for other modes.
\IEEEpeerreviewmaketitle

\section{Introduction}
% no \IEEEPARstart

With the development of quantum computing, as well as in classical computing, the emergence of mechanisms to detect and correct errors should be implemented, then follows the need of the theory of quantum error correction codes (\cite{ Nielsen2000}, \cite{Knill1997}, \cite{gaitan2008quantum}, \cite{Aharonov97fault-tolerantquantum}, \cite{Calderbank1996},\cite{Bennett1996}, \cite{Steane1996}, \cite{Gottesman1996}). Protection against quantum error involves different challenges than protecting against classic mistakes, but despite this, much of the classical theory of error-correcting codes can be harnessed for quantum codes.

A quantum code is a subspace of a Hilbert space and is usually represented by the parameters $ ( ( n , K , e) )_d $. The parameter $ d $ is the amount of quantum levels being considered, e.g, the number of linearly independent states a single qudit can present. The parameter $ n $ is the dimension of the larger Hilbert space, $ K $ is the dimension of the code. The parameter $e$ is the number of qudits that the code can detect.

A class of quantum codes much explored in the literature is the class of stabilizer codes (\cite{Gottesman1997}, \cite{Calderbank97quantumerror} ). In these, the subspace which defines the code is the intersection of the subspaces associated with the eigenvalue $1$ of a set of operators that form a subgroup of the Pauli group. This group is called the stabilizer group $ S $.

In a CWS code (Codeword Stabilized Quantum Codes) with parameters $( ( n , K , e) )_d$, the stabilizer group stabilizes a single quantum state (up global phase) and the basis elements are constructed by applying distinct Pauli Operators in the stabilizer state (\cite{Chen2008}, \cite{Chuang2009}, \cite{Cross2009}, \cite{Hu2008}, \cite{Yu2008}, \cite{Yu2007},  \cite{Yu2009}). The CWS codes are a generalization of the stabilizers codes, since it has been proved that every stabilizer code can be seen as a CWS code. Conversely, it was also proved that a CWS code satisfying certain conditions is actually a stabilizer code. There are several results in the literature about the CWS codes and one of these allow us to construct quantum CWS codes with the higher possible parameter $K$ with parameters $ n $ and $ e $ fixed. The problem of constructing good  CWS (with parameter $ K $ large) becomes then the problem of constructing good classical codes to correct a particular set of errors. The theory of CWS codes then presents a method to create new quantum codes (stabilizer or not) based on classical codes. 

 This work present a new approach in the study the CWS codes by the algebraic structure of modules and the generalization of the concept of parity check matrix. We also present Theorem~\ref{LinearisSta}, that generalizes results found in the literature and helps to determine when a CWS code is a stabilizer code.

In the second section, we explain in more details the structure of CWS codes, based mainly in \cite{Chen2008}. In the third section we introduce and generalize the notion of parity matrix. In the fourth section, we present some necessary results on the theory of modules. In the fifth section we prove some known results about stabilizer spaces using the concept of parity matrix as done in \cite{Gheorghiu2014505}, but using also the structure of modules. In the sixty section we present our main result (Theorem~\ref{LinearisSta}). The Corollaries~\ref{col1} and \ref{col2} concerning this theorem represent well known results in the literature, although we also have not found a prove on qudits for these results.

%\hfill mds
 
%\hfill January 11, 2007

\section{Structure of CWS codes}
For a qudit, the Pauli group $ \mathcal{G}_d^1$ is generated by $X$, $Z$, where the commute relation is given by $$ZX = q_dXZ$$ and $q_d=e^{i\frac{2\pi}{d}}$. Note that setting this way, for a qubit ($ d = 2$) the Pauli group $\mathcal{G}_2^1$, which in the binary case we also represent by $\mathcal{G}$, is given by 
\begin{equation}
\mathcal{G}_2^1=\{ I, -I, Z , -Z , X, -X, ZX, -ZX \} .
\end{equation}
There is a representation of $\mathcal{G}_d^1$ and a basis $\{\dirac{k}\}_{k=0}^{d-1}$ such that
\begin{equation}
Z\dirac{k}=q_d^k\dirac{k},\;\;X\dirac{k}=\dirac{k+1}, \textrm{para todo } k \in Z_{d}.
\end{equation}
It follows that $Z^jX^k=q_d^{jk}X^kZ^j$ and general relation (\cite{Ketkar2006}) is given by

\begin{eqnarray}\label{comut1}
&(q_d^{i_1}Z^{j_1}X^{k_1})(q_d^{i_2}Z^{j_2}X^{k_2}) & \\
&= q_d^{j_1k_2-k_1j_2}(q_d^{i_2}Z^{j_2}X^{k_2})(q_d^{i_1}Z^{j_1}X^{k_1})&\nonumber
\end{eqnarray}

Considering these commute relations, an element of the Pauli group $\mathcal{G}_d^n=\underbrace{\mathcal{G}_d^1}_{1}\otimes\ldots,\otimes \underbrace{\mathcal{G}_d^1}_{n}$ may be written as
$$\alpha Z^{\mathbf{V}}X^{\mathbf{U}}$$ where $\alpha=q_d^{k}$ where $\mathbf{V}$ e $\mathbf{U}$ represent vectors in $\mathbb{Z}_d^n$ indicating the power of $ Z $ and $ X $ on each qudit respectively. Extending the commute relation \ref{comut1} we have

\begin{eqnarray}\label{comut2}
 &(Z^{\mathbf{U}_1}X^{\mathbf{U}_2})(Z^{\mathbf{V}_1}X^{\mathbf{V}_2})& \\
 & =q_d^{\langle \mathbf{U}_1,\mathbf{V}_2\rangle - \langle \mathbf{U}_2,\mathbf{V}_1\rangle}(Z^{\mathbf{V}_1}X^{\mathbf{V}_2})(Z^{\mathbf{U}_1}X^{\mathbf{U}_2}) & \nonumber
\end{eqnarray}

where $ \langle . , . \rangle $ denotes the canonical inner product restricted to $ \mathbb{Z}_d^n$, which is not necessarily a linear space. If $ d $ is prime $Z_d^n$ is a linear space, and in this case,  $Z_d$ is a field, otherwise $ Z_d^n $ has the structure of a module.

Disregarding global phase, We represent one Pauli operator $E=\alpha Z^{\mathbf{U}_1}X^{\mathbf{U}_2}$ as expanded vector in $ \mathbb{Z}_d^{2n} $. This is done by applying the function $R$ defined as follows:

\vspace*{12pt}
\noindent
{\bf Definition~1:}\label{funcaoR}
Let $\mathcal{G}_d^n$ be the qudit Pauli Group with $n$ entries and the $\mathbb{Z}_d$-module, $\mathbb{Z}_d^{2n}$ . The function $R$ is defined as 
\begin{align*}
R:\mathcal{G}_d^n &\rightarrow  \mathbb{Z}_d^{2n}\\
 \alpha Z^{\mathbf{U}_1}X^{\mathbf{U}_2} &\mapsto  (\mathbf{U}_1|\mathbf{U}_2).
\end{align*}
\vspace*{12pt}
\noindent

Clearly the function $ R $ is well defined, is surjective but not injective, since the information contained in the phase $ \alpha $ is lost. the function $ R $ is also a group homomorphism, e.g, $R(g_1g_2)=R(g_1)+R(g_2)$ and $R(g^{\dagger})=-R(g)$. Using the representation of the elements of $ \mathcal{G}_d^n $ given by the function $ R $, we can determine the phase that appears in the general commute relation (\ref{comut2}) through the operator of dimension  $2n\times 2n$ times defined by

\begin{equation}\label{lambda}\Lambda=\left[\begin{array}{cc}
0 & I\\ -I & 0
\end{array}\right],
\end{equation}
where $ 0 $ and $ I $ refers to zero and identity submatrices, respectively of dimension $n\times n$. Using this operator, we note that any two Pauli operators in $ P_1 $ and $ P_2 $ obey the commute relation
\begin{equation}\label{comut3}
P_1P_2=q_{d}^{R(P_1)\Lambda R^T(P_2)}P_2P_1.\footnote{$R(P)$ is a line vector $R^T(P)$ is the transposed line vector.}
\end{equation}

 The operation $R(P_1)\Lambda R^T(P_2)$ is known as symplectic product (\cite{Ketkar2006}).

We can, according to \cite{Chen2008}, construct a CWS code by two sets:
\begin{enumerate}
\item An Abelian group $S=\langle s_1,\ldots,s_r\rangle$ of order $ | S | = d^n $ not containing multiples of the identity except the identity itself ( this group stabilizes, disregarding global phase, a single state $ \dirac{\psi} \in \mathcal{H}_d^n$ );
\item A set $ W = \{w_i\}_{i=1}^{K}$ where $\{w_i\} $ are Pauli operators such that $ \beta=\{w_i\dirac{\psi}\}$ represents the code base.
\end{enumerate}

Moreover, we can verify that these conditions guarantee that all operators $ S $ are simultaneously diagonalizable, e.g, there exists a common basis of eigenvectors to all operators of $ S $. Regardless of $S$ stabilize a single state or not, let's call this group \textit{Stabilizer Group}.

\section{Parity Check Matrix}

We can  represent a collection of Pauli operators through a matrix as the way the theory of classical error correction codes does ( \cite{MacWillians1977}, \cite{vermani1996}). We will call this matrix by \textit{Parity Check Matrix}, or simply \textit{Parity Matrix}. This matrix is already used in the formalism of stabilizer codes (\cite{Nielsen2000}) with the generators of the stabilizer group, but here we define in general, for any set of Pauli operators.

\begin{definition}\label{MatrizV} Given a collection of Pauli operators  $\mathcal{C}=\{ p_1,\ldots, p_r\}$ in $ \mathcal{G}_d^n $, we call \textbf{parity check matrix} of $ \mathcal{C}$, $R(\mathcal{C})$, the matrix of size $ { r \times 2n} $ where each row of the matrix $R$ is the vector $ ( p_i ) $. 
\end{definition}

Given a stabilizer group $ S $ with generators $ \mathbb{S}=\{ s_1, \ldots, s_r \} $, $ R ( \mathbb{S} ) $ will be the parity check matrix of size $ r \times 2n $ about the collection of generators of $ S $. The $ \mathbb {Z}_d $-module generated by the rows of the parity check matrix over $ \mathbb{S}$ will be denoted by $ \langle R ( \mathbb { S } ) \rangle $. It is easy to verify that $ | S | = \# \langle R (\mathbb{S}) \rangle $, where the symbol $\#$ denotes the cardinality of the set. \footnote{The concept of parity check matrix will also be used on the collection of \textit{codewords} $ W = \{w_i\}_{ i = 1 }^{K}$, generating a parity check matrix $ R(W)$ of size $ K \times 2n $.}

It is useful at this point to enunciate the most important theorem for CWS codes using the parity check matrix definition. This theorem allow us to create quantum CWS  codes looking for classical codes \cite{Chuang2009}, \cite{Cross2009}

\begin{theorem}\label{QuaCla}
Let $ \mathcal{Q} $ be a CWS code with stabilizer generators $\mathbb{S}=\{s_1,\ldots,s_r\}$,  \textit{codewords} $ W = \{ w_i \}_{ i = 1 }^{ K } $, $ w_1 = I $, $ \epsilon = \{ E \} $ a set of Pauli errors and let $Cl_S$ be the function
\begin{equation}\label{funcaocls}
Cl_S(P)=R(\mathbb{S})\Lambda R^T(P).
\end{equation}
Then the code $ \mathcal{Q} $ detects errors in $ \epsilon $ if and only if $ Cl_S(W) $ detects errors in $ Cl_S( \epsilon ) $ and moreover, if $ Cl_S ( E) = 0$  then \begin{equation}\label{condicaoQ} Ew_i=w_iE\end{equation} for all $ i $.

\end{theorem}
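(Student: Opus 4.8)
The plan is to reduce the quantum error-detection condition to the classical one by evaluating the expectation values of Pauli operators on the stabilizer state $\dirac{\psi}$. First I would recall that a code with orthonormal basis $\{w_i\dirac{\psi}\}$ detects the error set $\epsilon$ precisely when, for every $E\in\epsilon$ and all $i,j$, the Knill--Laflamme detection condition
$$\diract{\psi}w_i^\dagger E w_j\dirac{\psi}=c_E\,\delta_{ij}$$
holds for some scalar $c_E$ depending only on $E$. Thus the entire proof reduces to computing the matrix elements $\diract{\psi}w_i^\dagger E w_j\dirac{\psi}$, and the ``moreover'' clause should be read as part of the equivalence: detection holds iff the classical code detects the classical errors \emph{and} the commutation requirement holds for those $E$ with $Cl_S(E)=0$.

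The central lemma I would establish is that for any Pauli operator $P$ one has $\diract{\psi}P\dirac{\psi}=0$ whenever $Cl_S(P)\neq 0$, while $\diract{\psi}P\dirac{\psi}$ is a nonzero phase whenever $Cl_S(P)=0$. The vanishing direction follows from $s_k\dirac{\psi}=\dirac{\psi}$ together with the commutation relation (\ref{comut3}): if some component of $Cl_S(P)$ is nonzero, then $P$ and the corresponding generator differ by a nontrivial phase, and inserting $s_k$ into $\diract{\psi}P\dirac{\psi}=\diract{\psi}s_k P\dirac{\psi}$ forces $(1-q_d^{-c})\diract{\psi}P\dirac{\psi}=0$ with $c\neq 0$, hence the matrix element vanishes. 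For the nonvanishing direction I would use that $S$ is maximal ($|S|=d^n$ and $\Lambda$ is nondegenerate), so $Cl_S(P)=0$ forces $R(P)\in\langle R(\mathbb{S})\rangle$ and hence $P\in S$ up to a global phase; then $P\dirac{\psi}$ is a phase times $\dirac{\psi}$ and the expectation is exactly that phase.

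I would then combine this with the linearity of $Cl_S$, immediate from $R$ being a homomorphism, giving $Cl_S(w_i^\dagger E w_j)=Cl_S(E)-Cl_S(w_i)+Cl_S(w_j)$. Applying the lemma to $P=w_i^\dagger E w_j$, the matrix element is nonzero exactly when $Cl_S(E)=Cl_S(w_i)-Cl_S(w_j)$. For $i\neq j$ the detection condition demands this vanish, i.e.\ $Cl_S(E)$ is never a difference of two distinct classical codewords $Cl_S(w_i),Cl_S(w_j)$, which is precisely the statement that $Cl_S(W)$ detects the classical error $Cl_S(E)$. The diagonal entries $i=j$ satisfy $Cl_S(w_i^\dagger E w_i)=Cl_S(E)$ and split into two cases: if $Cl_S(E)\neq 0$ they all vanish and $c_E=0$ automatically, already covered by classical detection; if $Cl_S(E)=0$, writing $w_i^\dagger E w_i=q_d^{R(E)\Lambda R^T(w_i)}E$ shows the diagonal entries all coincide with the $w_1=I$ value $\diract{\psi}E\dirac{\psi}$ if and only if $R(E)\Lambda R^T(w_i)=0$ for every $i$, i.e.\ $Ew_i=w_iE$ for all $i$. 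Collecting the off-diagonal conditions into ``$Cl_S(W)$ detects $Cl_S(\epsilon)$'' and the $Cl_S(E)=0$ diagonal condition into the commutation requirement yields the stated equivalence together with the ``moreover'' clause.

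The main obstacle I anticipate is the nonvanishing direction of the central lemma, namely the implication $Cl_S(P)=0\Rightarrow P\in S$ up to phase. This is the assertion that the centralizer of $S$ in $\mathcal{G}_d^n$ coincides with $S$ modulo phases, resting on the maximality $|S|=d^n$ and the nondegeneracy of $\Lambda$. For prime $d$ this is ordinary symplectic linear algebra over the field $\mathbb{Z}_d$, but for general $d$ it must be argued at the level of $\mathbb{Z}_d$-modules, comparing $\#\langle R(\mathbb{S})\rangle$ with the cardinality of its symplectic complement; this is exactly where the module-theoretic framework of the paper carries the weight of the argument.
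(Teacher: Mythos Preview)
The paper does not actually prove this theorem: it is stated with citations to \cite{Chuang2009} and \cite{Cross2009} as a known result, and the text moves directly to the section on modules. So there is no ``paper's own proof'' to compare against.

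That said, your proposal is correct and is essentially the standard argument from the CWS literature. The reduction via the Knill--Laflamme detection condition, the dichotomy lemma for $\diract{\psi}P\dirac{\psi}$ according to whether $Cl_S(P)$ vanishes, the additivity $Cl_S(w_i^\dagger E w_j)=Cl_S(E)-Cl_S(w_i)+Cl_S(w_j)$, and the split into off-diagonal (classical detection) and diagonal (commutation when $Cl_S(E)=0$) cases are all handled correctly. Your reading of the ``moreover'' clause as part of the equivalence is also the right one.

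The obstacle you flag---that $Cl_S(P)=0$ forces $P\in S$ up to phase for general $d$---is exactly what the paper supplies later: Corollary~\ref{maximal} (``unless phase, $S$ is a maximal set of Abelian operators'') is proved via the module-theoretic counting in Section~\ref{structure}, so within this paper you may simply invoke it. One small point worth making explicit in the vanishing direction of your lemma: for qudits $s_k$ need not be Hermitian, but $s_k\dirac{\psi}=\dirac{\psi}$ still gives $\diract{\psi}s_k=\diract{\psi}$ because $s_k$ is unitary with eigenvalue $1$ on $\dirac{\psi}$; this justifies the insertion step.
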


\section{Modules}

 The algebraic structure of modules can be seen in \cite{Ricou2004}. In this work, we use repeatedly that, given a homomorphism from $ \mathbb{Z}_d $-modules represented by a matrix $ \mathcal{T}$, the cardinality of the module generated by the rows of $ \mathcal{T} $, which we denote by $ \langle \mathcal{T} \rangle $ is equal to the cardinality of the module generated by the columns of $ \mathcal{T} $, which can be represented by the module $ Im(\mathcal{T} ) $. We will refer to these modules as row-modules and column-modules, respectively.

The isomorphisms theorems for modules \cite{Ricou2004} will be used frequently in the proofs of this work   

\begin{theorem}\label{isomorphismstheorems}
Let $A$ be a ring.
\begin{enumerate}
\item If $\phi: M_1 \rightarrow M_2$ is a $A$-module homomorphism, then there is a isomorphism:
$$Im(\phi)\simeq \frac{M_1}{Ker(\phi)}.$$ 
\item If $N_1,N_2$ are submodules of a $A$-module $M$, there is a isomorphism:
$$\frac{N_1+N_2}{N_2}\simeq \frac{N_1}{N_1 \cap N_2}.$$
\item If $N,P$ are submodules of a $A$-module $M$ and $P \subset N\subset M$ so $P$ is an submodule of $N$ and there is a isomorphism:
$$M/N\simeq \frac{M/P}{N/P}.$$
\end{enumerate}
\end{theorem}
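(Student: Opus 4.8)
The plan is to establish part (1), the first isomorphism theorem, directly from the definitions, and then to obtain parts (2) and (3) as corollaries by exhibiting in each case a surjective homomorphism whose kernel is the required submodule and invoking part (1). This is the standard bootstrapping strategy, and since the paper works over an arbitrary ring $A$ (in the application, $A = \mathbb{Z}_d$, which need not be a field), the only real content is checking that the $A$-action is respected throughout.

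For part (1) I would define $\bar{\phi}: M_1/Ker(\phi) \rightarrow Im(\phi)$ by $\bar{\phi}(m + Ker(\phi)) = \phi(m)$. First I would verify it is well defined: if $m + Ker(\phi) = m' + Ker(\phi)$, then $m - m' \in Ker(\phi)$, whence $\phi(m) = \phi(m')$. Using the $A$-linearity of $\phi$ together with the quotient-module operations, $\bar{\phi}$ is an $A$-module homomorphism. It is surjective onto $Im(\phi)$ by construction, and injective because $\bar{\phi}(m + Ker(\phi)) = 0$ forces $m \in Ker(\phi)$, i.e. the class is zero. Hence $\bar{\phi}$ is the desired isomorphism.

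For part (2) I would consider the composite of the inclusion $N_1 \hookrightarrow N_1 + N_2$ with the canonical projection $N_1 + N_2 \rightarrow (N_1 + N_2)/N_2$; call it $\pi$. Every element of $(N_1 + N_2)/N_2$ has a representative $n_1 + n_2$ with $n_i \in N_i$, and its class equals that of $n_1$, so $\pi$ is surjective; its kernel is exactly $\{\, n_1 \in N_1 : n_1 \in N_2 \,\} = N_1 \cap N_2$. Applying part (1) to $\pi$ yields the stated isomorphism. For part (3), since $P \subset N$ the assignment $m + P \mapsto m + N$ is a well-defined surjective homomorphism $M/P \rightarrow M/N$ whose kernel is precisely $N/P$ (a submodule of $M/P$ because $P \subset N \subset M$); part (1) again gives $M/N \simeq (M/P)/(N/P)$.

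The arguments mirror those in the abelian-group and vector-space settings. I expect no genuine obstacle: the only place requiring care is confirming that well-definedness of each induced map uses the relevant submodule inclusion ($Ker(\phi)$ absorbing scalar multiples in (1), and $P \subset N$ in (3)), and that the quotient scalar action by $A$ is preserved. Once part (1) is in hand, parts (2) and (3) reduce to identifying the correct surjection and computing its kernel, both routine.
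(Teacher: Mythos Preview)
Your proposal is correct and is the standard textbook argument. Note, however, that the paper does not actually supply its own proof of this theorem: it simply states the isomorphism theorems and cites \cite{Ricou2004} as a reference, using them as black boxes in the later sections. So there is no alternative argument in the paper to compare against; your proof would in fact \emph{add} content that the paper omits.
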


%%%%%%%%%%%%%%%%%%%%%%%%%%%%%%%%%Samuel%%%%%%%%%%%%%%%%%%%%%%%%%%%%%%%%%%%%%%%%%%%
%%%%%%%%%%%%%%%%%%%%%%%%%%%%%%%%%%%%%%%Samuel%%%%%%%%%%%%%%%%%%%%%%%%%%%%%%%%%%%%%%%%%

 The definitions of elementary operations used in this work are\footnote{It's important to remark that all elementary operations are made in $\mathbb{Z}_ {d}$}
 
\begin{definition}\label{elementaryoperations}
The elementary operations are given by:
\begin{enumerate}
\item Exchange two columns/rows ($C_i \leftrightarrow C_j$ or $R_i \leftrightarrow R_j$).
\item Add a column/row with the multiple in $\mathbb{Z}_d$ of other column/row
 ($ C_ {i} \rightarrow C_ {i} + \beta C_ {j}$ or $ R_ {i} \rightarrow R_ {i} + \beta R_ {j}$).
\end{enumerate}
\end{definition}

Given a matrix $\mathcal{T}$ with entries in $\mathbb{Z}_d$, we will first prove that elementary operations in their rows or column do not change the cardinality of the row and column modules. For this we assume a matrix $ \mathcal{T} = [C_ {1} , C_ {2} , ... , C_ {n}] $, where
$ C_ {i} $ is a column vector in $ \mathbb{Z}_ {d} ^ {k} $.

Then the column module is:

\[
Im(\mathcal{T})=\{X_{1}C_{1}+...+X_{n}C_{n}/X_{i}\in \mathbb{Z}_{d}\},
\]

Clearly exchange between two columns do not change the cardinality of  $Im(\mathcal{T})$. Neither the operation $ C_ {i} \rightarrow C_ {i} + \beta C_ {j}$, as we will see below. Without loss of generality, assume the operation using the first and second column, then

\[
Im(\mathcal{T}')=\{X'_{1}(C_{1}+\beta C_{2})+...+X'_{n}C_{n}/X'_{i}\in Z_{d}\},
\]
But $Im(\mathcal{T}')\subset Im(\mathcal{T})$. To see this, make $ X'_ {1} = X_{1} $, $ X'_{2} =(X_{2} - \beta X_{1}) $, $X'_{3}=X_{3}$,
...,$X'_{n}=X_{n}$.

We also have $Im(T)\subset Im(\mathcal{T}')$. To see this, make $ X_ {1} = X'_{1} $, $ X'_{2} =(X_{2} + \beta X_{1}) $, $X_{3}=X'_{3}$,
...,$X_{n}=X'_{n}$.

Elementary operations with the lines also do not affect the column module. To see this, consider the matrix $ \mathcal{T} $ as
$\mathcal{T}=\left[\begin{array}{c}
R_{1}\\
R_{2}\\
\vdots\\
R_{k}
\end{array}\right]$ where each $ R_{i} $ is a row vector in $ Z_{d}^{n} $ and $R_{ij}$ your components. The $Kernel$
of $ \mathcal{T} $ is:

\begin{eqnarray}
 & Ker(\mathcal{T})=\{[X_{1},X_{2},...,X_{k}] & \\
 & /R_{i1}X_{1}+R_{i2}X_{2}+...+R_{in}X_{n}=0\;\forall i\in (1\ldots k)\} & \nonumber
\end{eqnarray}

Clearly exchanging the lines do not change the cardinality of $Ker(\mathcal{T})$. With a analogous proof for the column module we see that   $R_{i}\rightarrow R_{i}+\beta R_{j}$ also do not change it. How $Ker(\mathcal{T})$ do not change with elementary operations with lines, by the first isomorphism theorem \ref{isomorphismstheorems}, the cardinality of $Im(\mathcal{T})$ also do not change which means that the number of elements of the column module also do not change.

To show that the elementary operations do not change the
line module we can make a procedure analogous to the previous one using
the transposed matrix $ \mathcal{T}^{T} $.

The next step is to make elementary operations until $T$ get a form in wich we can see that both cardinalities are equal. We will need the lemma:

\begin{lemma}\label{divisao}
Let $a$ and $b$ be integers, $0\leq a,b\leq d-1$ and $a\neq 0$, so there exist $\overline{q},\overline{r}\in \mathbb{Z}_d$ satisfying $0\leq r <a$ and 
$$\overline{r}=\overline{a}\,\overline{q}+\overline{b}$$
\end{lemma}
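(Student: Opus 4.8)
The plan is to deduce the statement from the ordinary Euclidean division algorithm over $\mathbb{Z}$ and then project into $\mathbb{Z}_d$. First I would note that $a \neq 0$ together with $0 \le a \le d-1$ forces $a \ge 1$, so division by $a$ makes sense in $\mathbb{Z}$. Dividing the integer $b$ by $a$ yields integers $q'$ and $r'$ with $b = a q' + r'$ and $0 \le r' < a$.

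Next I would rewrite this identity as $r' = a(-q') + b$ and apply the canonical projection $\mathbb{Z} \to \mathbb{Z}_d$, which is a ring homomorphism. This gives $\overline{r'} = \overline{a}\,\overline{-q'} + \overline{b}$ in $\mathbb{Z}_d$. Choosing $\overline{r} := \overline{r'}$ and $\overline{q} := \overline{-q'}$ then produces the required equation $\overline{r} = \overline{a}\,\overline{q} + \overline{b}$, while the bound $0 \le r < a$ is exactly the bound $0 \le r' < a$ supplied by the division algorithm.

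The only delicate point — and the nearest thing to an obstacle — is the bookkeeping between the integer statement and its image in $\mathbb{Z}_d$. I would check that $r'$ is an honest element of $\{0,\dots,d-1\}$, which holds because $r' < a \le d-1$, so that the inequality $0 \le r < a$ is meaningful for the chosen representative; and that replacing $-q'$ by its residue class is harmless, which holds because $\mathbb{Z}_d$ possesses additive inverses. Note that the argument never divides \emph{inside} $\mathbb{Z}_d$, so it is insensitive to whether $d$ is prime: the division is carried out entirely in $\mathbb{Z}$, which is precisely why the lemma can serve as the engine for the Euclidean-style reduction of parity-matrix entries modulo $d$ in the subsequent normal-form argument.
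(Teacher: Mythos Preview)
Your proof is correct and follows essentially the same route as the paper: perform Euclidean division $b = aq' + r$ in $\mathbb{Z}$, rearrange to $r = a(-q') + b$, and reduce modulo $d$. The only cosmetic difference is that the paper names the explicit representative $q = d - q'$ for the class $\overline{-q'}$, whereas you work directly with the residue class.
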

\begin{proof}
Using the Euclid's division algorithm,there exist $q'$ and $0\leq r<d-1$ satisfying
$$b=a\,q'+r$$
then
\begin{align*}
\overline{b}=\overline{a}\,\overline{q'}+\overline{r}
\end{align*}
inthis case, how $0\leq b<d$, we have also $0\leq q' <d$. Take $q=d-q'$ and we have
$$\overline{r}=\overline{b}+\overline{a}\,\overline{q}$$
\end{proof}

The next proposition allow us to obtain an equivalent to gaussian elimination through elementary operations.

\begin{proposition}\label{pelement}
Through elementary operations (with the column elements) we can transform $V=(\li{v_1},\ldots, \li{v_n})$ $\in \mathbb{Z}_d^n$ with $0\leq v_i <d$ and at least one not null entry, in $V'=(\li{a},\li{0},\ldots,\li{0})$ wih only the first entry $\li{a}$ assuming a not null entry
\end{proposition}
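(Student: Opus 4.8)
The plan is to mimic the classical Euclidean algorithm, using Lemma~\ref{divisao} as the ``division step'' and descending on the smallest non-zero coordinate. Recall that $V$ is a row vector, so the column operations of Definition~\ref{elementaryoperations} act on its individual entries: we may swap two coordinates, or replace $\overline{v_j}$ by $\overline{v_j}+\overline{\beta}\,\overline{v_i}$ for some $\overline{\beta}\in\mathbb{Z}_d$. I would first note that both operations are invertible (a swap is its own inverse, and $C_j\to C_j+\overline{\beta}\,C_i$ is undone by $C_j\to C_j-\overline{\beta}\,C_i$), so they can never transform a non-zero vector into the zero vector; this will guarantee at the end that the surviving entry $\overline{a}$ is genuinely non-null.

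The core is a single reduction pass, repeated. Let $m$ be the least value, viewed as an integer in $\{1,\dots,d-1\}$, occurring among the non-zero coordinates of the current vector, and swap a coordinate equal to $m$ into the first position, so that $v_1=m\neq 0$. For each $j\geq 2$ I would apply Lemma~\ref{divisao} with $a=m$ and $b=v_j$ to obtain $\overline{q_j},\overline{r_j}\in\mathbb{Z}_d$ with $0\leq r_j<m$ and $\overline{r_j}=\overline{m}\,\overline{q_j}+\overline{v_j}$. The allowed column operation $C_j\to C_j+\overline{q_j}\,C_1$ then replaces the $j$-th coordinate by exactly $\overline{r_j}$, while leaving $C_1$ (hence the pivot $m$) untouched. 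After one full pass, every coordinate beyond the first is a remainder $r_j$ with $0\leq r_j<m$.

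The termination argument finishes the proof. If all the remainders $r_j$ vanish, the vector is already $V'=(\overline{m},\overline{0},\ldots,\overline{0})$ and we are done, with $\overline{a}=\overline{m}\neq\overline{0}$. Otherwise some $r_j\neq 0$ is strictly smaller than $m$, so the least non-zero value of the vector has strictly decreased; iterating the pass produces a strictly decreasing sequence of positive integers, which by well-ordering must halt, and it can only halt in the case where all coordinates past the first are zero. The main obstacle, and the only delicate point, is precisely this termination bookkeeping: one must check that each pass either terminates or strictly lowers the pivot, and that Lemma~\ref{divisao} realises each reduction through a legitimate column operation (the remainder is obtained exactly by adding a $\mathbb{Z}_d$-multiple of the pivot coordinate). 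The invertibility remark from the first paragraph then ensures the final first entry cannot be $\overline{0}$, completing the argument.
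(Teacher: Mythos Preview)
Your argument is correct and follows essentially the same Euclidean-style approach as the paper: swap the smallest non-zero entry into the first position, use Lemma~\ref{divisao} to reduce every other entry below the pivot via the column operation $C_j\to C_j+\overline{q_j}\,C_1$, and iterate. You simply make explicit the termination argument and the non-vanishing of the final pivot, which the paper leaves implicit.
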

\begin{proof}
Repeat the process:
\begin{enumerate}
\item Let $v_j$ be one entry with the least not null absolute value. Exchange the $j$ entry with the first. Then rename the entries to $V=(\li{a_1},\ldots, \li{a_n})$.
\item For each $j\neq 1$ use Lemma~\ref{divisao} to obtain in the $j$ entry,
$$\li{r_j}=\li{a_j}+\li{q_j}\,\li{a_1}\,\;\;\textrm{onde}\;\;0\leq r_j< a_1.$$
\item Repeat procedures 1 and 2 until get the result.
\end{enumerate}
\end{proof}

In the  next proposition we get the statement about the equality of the cardinalities of the row and columns modules.
\begin{proposition}\label{cardinalidades}
Let $\mathcal{T}_{(m\times n)}$ an matrix with entries in $\mathbb{Z}_d$ representing an $\mathbb{Z}_d$-module homomorphism, then  the cardinalities of the row and columns modules are equal, $\#Im(\mathcal{T})=\#\langle \mathcal{T}\rangle$.
\end{proposition}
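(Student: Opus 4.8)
The plan is to reduce $\mathcal{T}$ to a block form by elementary operations and then argue by induction on the size of the matrix. The key leverage is the fact, just established, that the elementary operations of Definition~\ref{elementaryoperations} change neither $\#Im(\mathcal{T})$ nor $\#\langle\mathcal{T}\rangle$. Hence it suffices to exhibit, for any nonzero $\mathcal{T}$, a sequence of row and column operations producing a matrix of the shape
$$D=\begin{pmatrix} \li{a_1} & 0 \\ 0 & \mathcal{T}' \end{pmatrix},$$
with $\mathcal{T}'$ of size $(m-1)\times(n-1)$, for which the desired equality is inherited from the same equality for $\mathcal{T}'$.

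First I would produce the pivot. If $\mathcal{T}=0$ both modules are trivial and there is nothing to prove, so assume some entry is nonzero. Applying Proposition~\ref{pelement} to the first row, realized as column operations on the whole matrix, turns the first row into $(\li{a_1},\li{0},\ldots,\li{0})$. I would then run the analogous process with row operations to clear the first column below the pivot, invoking Lemma~\ref{divisao} to replace each entry by a remainder strictly smaller than the current pivot representative. Clearing the column may reintroduce nonzero entries in the first row, so the two clearings must be alternated; termination is guaranteed because each alternation brings a strictly smaller nonzero representative into the pivot position, and these representatives lie in the finite set $\{1,\ldots,d-1\}$. When the process stops, the first row and first column vanish except for the $(1,1)$ entry $\li{a_1}$, i.e. we have reached the block form $D$ above.

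For the inductive step I would observe that for $D$ the row module splits as a direct sum: a combination of the rows of $D$ has first coordinate ranging independently over the cyclic module $\langle\li{a_1}\rangle\subseteq\mathbb{Z}_d$ and its remaining coordinates ranging over $\langle\mathcal{T}'\rangle$, so $\#\langle D\rangle = \#\langle\li{a_1}\rangle\cdot\#\langle\mathcal{T}'\rangle$. The identical decomposition holds for the column module, giving $\#Im(D)=\#\langle\li{a_1}\rangle\cdot\#Im(\mathcal{T}')$. By the induction hypothesis $\#Im(\mathcal{T}')=\#\langle\mathcal{T}'\rangle$, whence $\#Im(D)=\#\langle D\rangle$; since the operations carrying $\mathcal{T}$ to $D$ preserve both cardinalities, the same equality holds for $\mathcal{T}$. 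The base case $\min(m,n)=1$ is handled directly by Proposition~\ref{pelement}, which reduces a single row (or column) to $(\li{a},\li{0},\ldots,\li{0})$, both of whose modules have cardinality $\#\langle\li{a}\rangle=d/\gcd(a,d)$.

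The main obstacle is the reduction step itself, because $\mathbb{Z}_d$ is only a ring (with zero divisors when $d$ is composite) and the admissible operations do not include scaling a row or column by a unit, so the usual Smith-normal-form argument is not directly available. The point to get right is the termination of the alternating row/column clearing, and Lemma~\ref{divisao} is exactly what forces the pivot representative to strictly decrease at each stage. I would also emphasize that one never needs $\li{a_1}$ to divide the entries of $\mathcal{T}'$: a genuine block splitting, rather than a full diagonalization, is all that the cardinality count requires, which is what keeps the induction self-contained.
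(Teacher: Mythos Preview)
Your proposal is correct and follows essentially the same route as the paper: reduce by the elementary operations of Definition~\ref{elementaryoperations}, using Proposition~\ref{pelement} and Lemma~\ref{divisao} to clear the first row and first column down to a single pivot, and then pass to the smaller block. The only cosmetic difference is that the paper iterates this clearing across all rows and columns to reach a matrix with nonzero entries only on the $(i,i)$ positions and reads off the equality there, whereas you stop after one block split and invoke induction on $\mathcal{T}'$; your explicit termination argument for the alternating row/column clearing is a welcome detail that the paper leaves implicit.
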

\begin{proof}
How elementary operations do not change the cardinalities of the row and columns modules, just follow the procedure:
\begin{enumerate}
\item Consider together all the first row an first column values of $\mathcal{T}$. Take the least of them and through exchange elementary operations, put it on the $(1,1)$ position. 
\item Still considering together all the first row an first column values of $\mathcal{T}$, make how Proposition~\ref{pelement}. After this procedure, we make null all the first row an first column values of $\mathcal{T}$ but the $(1,1)$ position.
\end{enumerate}
repeating this procedure to the others rows and columns, we obtain an matrix $T'$ in wich only the $(i,i)$ positions with $i\in $  $1,\ldots \textrm{least}(n,m)$ may assume not null values. Clearly this matrix satisfies $\#Im(\mathcal{T}')=\#\langle \mathcal{T}'\rangle$ How elementary operations do not change the cardinalities of the row and columns modules, we get the statement. 
\end{proof}
%%%%%%%%%%%%%%%%%%%%%%%%%%%%%%%%%%%%%%%%%%%%ApendixB%%%%%%%%%%%%%%%%%%%%%%%%%%%%%%%%%%%%%%%%%%%%%%%%%%%%%%%

\section{Stabilizer Spaces}\label{structure}

A first question that arises is if the fact that the stabilizer group $ S $ be abelian, not containing multiple of the identity but the identity itself and $ | S | = d^n $ are necessary and sufficient conditions to $ S $ stabilize a single phase state\footnote{the phrase \textit{stabilize a single phase state} should be considered always disregarding a global phase} $ \dirac{ \psi } $. The answer to this question is positive. For the binary case the result is demonstrated in \cite{Nielsen2000} and makes use of the parity check matrix $ R ( \mathbb{S} ) $. We can extend this statement for the case $ d $ prime. We also can prove that the result holds for any $ d $ using ideas contained in \cite{Arvind2002} and \cite{Gheorghiu2014505}. Here,  we chose to make a new approach, similar to that made for qubits, using the parity check matrix $ R ( \mathbb{S} ) $ and the interpretation of the matrix $ R ( \mathbb{S}) \Lambda $ as a homomorphism between $ \mathbb{Z}_d$-modules.

\begin{lemma}\label{lematamanho1}
Let $ S = \langle s_1 , \ldots , s_r \rangle $ be an abelian subgroup of the Pauli group $ \mathcal{G}_d^ n$ not containing multiple of the identity other than the identity itself. If $ | S | < d^n $, then we can add an element $ P \in \mathcal{G}_d^n \backslash S $ such that $ \overline{S} = \langle s_1 , \ldots , s_r , P \rangle $ is still an abelian group not containing multiple of the identity other than the identity itself.
\end{lemma}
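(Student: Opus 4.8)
The plan is to carry the whole argument over to the symplectic side through the parity check matrix, and to lift back to $\mathcal{G}_d^n$ only at the very end. Put $M=\langle R(\mathbb{S})\rangle\subseteq\mathbb{Z}_d^{2n}$, the row module of the parity check matrix of $\mathbb{S}=\{s_1,\ldots,s_r\}$. The hypotheses translate cleanly: $S$ abelian means that every pair of generators commutes, so by (\ref{comut3}) one has $R(s_i)\Lambda R^T(s_j)=0$ for all $i,j$, i.e. $M$ is isotropic for the symplectic form $(u,w)\mapsto u\Lambda w^T$; and ``no multiple of the identity but the identity'' means precisely that $R$ is injective on $S$, so that $|S|=\#M$ as remarked after Definition~\ref{MatrizV}. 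I then introduce the symplectic complement
\[
M^{\perp}=\{\,w\in\mathbb{Z}_d^{2n}\;:\;R(s_i)\Lambda w^T=0\ \text{ for all }i\,\},
\]
and isotropy gives $M\subseteq M^{\perp}$. Everything reduces to showing that this inclusion is \emph{strict}.

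For that I read $R(\mathbb{S})\Lambda$ as a homomorphism of $\mathbb{Z}_d$-modules $\mathbb{Z}_d^{2n}\to\mathbb{Z}_d^{r}$, $w\mapsto R(\mathbb{S})\Lambda w^T$, whose kernel is exactly $M^{\perp}$. Since $\Lambda^2=-I$, the matrix $\Lambda$ is invertible, so right multiplication by $\Lambda$ is a bijection of $\mathbb{Z}_d^{2n}$ and the row module of $R(\mathbb{S})\Lambda$ has the same cardinality as $M$; by Proposition~\ref{cardinalidades} the column module $Im\big(R(\mathbb{S})\Lambda\big)$ then also has cardinality $\#M=|S|$. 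Applying the first isomorphism theorem (Theorem~\ref{isomorphismstheorems}, first item) yields
\[
\frac{\mathbb{Z}_d^{2n}}{M^{\perp}}\simeq Im\big(R(\mathbb{S})\Lambda\big),
\]
hence $\#M^{\perp}=d^{2n}/|S|$. The hypothesis $|S|<d^n$ now gives $\#M^{\perp}=d^{2n}/|S|>d^{n}>|S|=\#M$, so $M\subsetneq M^{\perp}$ and I may fix a vector $v\in M^{\perp}\setminus M$.

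Because $R$ is surjective there is $P_0\in\mathcal{G}_d^n$ with $R(P_0)=v$, and I set $P=q_d^{\,c}P_0$, leaving the global phase $c\in\mathbb{Z}_d$ to be chosen later. As $v\in M^{\perp}$, relation (\ref{comut3}) shows that $P$ commutes with every $s_i$ (for any $c$, since the phase is central), so $\overline{S}=\langle s_1,\ldots,s_r,P\rangle$ is abelian; and since $v\notin M=R(S)$ we have $P\notin S$, so the extension is proper. It remains only to choose $c$ (and, if necessary, to have chosen $v$ with some care) so that $\overline{S}$ still contains no multiple of the identity but the identity.

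This last point is the real obstacle. Writing a general element of the abelian group $\overline{S}$ as $g=s_1^{a_1}\cdots s_r^{a_r}P^{a_{r+1}}$, the equation $R(g)=0$ forces $a_{r+1}v\in M$; if $e$ is the order of $v$ in $\mathbb{Z}_d^{2n}/M$ and $h\in S$ is chosen with $R(h)=ev$, then reordering the $Z$'s and $X$'s by (\ref{comut2}) gives $P^{e}h^{-1}=q_d^{\,ce+\gamma}I$ for an integer $\gamma$ depending only on $v$ and the fixed phase of $h$, and every multiple of the identity in $\overline{S}$ is a power of this element. Thus I must solve $ce+\gamma\equiv 0\pmod d$ for $c$, which is possible as soon as $e\mid\gamma$. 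For odd $d$ the phase twist produced by (\ref{comut2}) is controllable and $c$ can always be adjusted, but for even $d$ a genuine defect appears — already in the smallest case $d=2$, $v=(1\,|\,1)$, where $P^2=-I$ no matter how the phase is chosen. The plan is therefore to pick $v\in M^{\perp}\setminus M$ carefully rather than arbitrarily, avoiding this even-$d$ defect (for $v=(\mathbf{a}\,|\,\mathbf{b})$ this amounts to controlling the parity of the self-pairing $\langle\mathbf{a},\mathbf{b}\rangle$), and then to fix $c$ by the above congruence. Establishing that a suitable $v$ always exists in $M^{\perp}\setminus M$, and that the chosen phase annihilates every multiple of the identity at once, is the step I expect to require the most care.
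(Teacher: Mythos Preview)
Your symplectic counting argument---pass to $M=\langle R(\mathbb{S})\rangle$, use invertibility of $\Lambda$ together with Proposition~\ref{cardinalidades} and the first isomorphism theorem to get $\#M^{\perp}=d^{2n}/|S|>d^n>\#M$, hence $M^{\perp}\setminus M\neq\emptyset$---is exactly the paper's proof. The divergence is entirely in the phase adjustment at the end.

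The paper does not try to solve your congruence $ce+\gamma\equiv 0\pmod d$, nor does it restrict the adjusting phase to powers of $q_d$. It simply picks any $\overline{P}\in\mathcal{G}_d^n\setminus S$ commuting with $S$, lets $o$ be the least natural number with $\overline{P}^{\,o}=\alpha I$, chooses $\beta\in\mathbb{C}$ with $\beta^{o}\alpha=1$, and sets $P=\beta\,\overline{P}$, so that $P^{o}=I$. This sidesteps your even-$d$ obstruction by allowing $\beta$ to be an arbitrary $o$-th root of $\alpha^{-1}$ rather than a power of $q_d$; your example $d=2$, $v=(1\mid 1)$ is then handled by $\beta=\pm i$. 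Your observation that such a $\beta$ need not lie in $\{q_d^{k}\}$ is correct, and with the paper's own definition of $\mathcal{G}_d^n$ (phases $q_d^{k}$ only) the resulting $P$ may formally leave the group; the paper glosses over this, effectively working in the slightly larger Pauli group with phases in the $2d$-th roots of unity, which is the standard fix in the literature. You are also more careful than the paper in taking $e$ to be the order of $v$ in $\mathbb{Z}_d^{2n}/M$ rather than in $\mathbb{Z}_d^{2n}$; the paper only arranges $P^{o}=I$ and does not explicitly discuss products $sP^{k}$ with $kv\in M$.

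In short: your plan is the paper's plan, pursued more scrupulously. If you accept the enlarged phase group, the paper's two-line ending (pick $\beta\in\mathbb{C}$ with $\beta^{o}\alpha=1$) replaces your entire last paragraph; if you insist on phases $q_d^{c}$, your proposed route of choosing $v$ with controlled self-pairing $\langle\mathbf{a},\mathbf{b}\rangle$ is the honest one, but then the statement itself needs the larger Pauli group to be literally true for even $d$.
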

\begin{proof}
The $ \Lambda $ operator  does not change the cardinality of the row-module, so we have $ | S | = \# \langle R ( \mathbb{S} ) \rangle = \# \langle R ( \mathbb{S} ) \Lambda \rangle < d^n $, and as the cardinality of the row-module is equal to the column-module, it follows that $ \# Im(R(\mathbb{S})\Lambda ) < d^n $. By the first isomorphism theorem for modules,
$$Im(R(\mathbb{S})\Lambda)\simeq \frac{\mathbb{Z}_{d}^{2n}}{Ker(R(\mathbb{S})\Lambda)}$$ 
 which means that $\#Ker(R(\mathbb{S})\Lambda)>d^n$, so there is $ \overline{P} \in \mathcal{G}_d^n \backslash S $ that commutes with all elements of $ S $. Let $o$ Be the first natural number such that $ \overline{P}^o = \alpha I $. Take $ \beta \in \mathbb{C}$ such that $ \beta^o\alpha = $ 1. Therefore $ P = \beta \overline{P}$ commutes with all elements of $ S $ and $ P^o = I $, so $ \overline{S} = \langle s_1, \ldots, s_r, P \rangle $ is an abelian group not containing multiple of the identity other than the identity itself.
\end{proof}
 
\begin{lemma}\label{lematamanho2}
Let $ S = \langle s_1, \ldots, s_r \rangle $ be an abelian subgroup of the Pauli group $ \mathcal{G}_d^n $ not containing multiple of the identity other than the identity itself. Then $ | S | \leq d^n $.
\end{lemma}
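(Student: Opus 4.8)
The plan is to translate the two hypotheses---$S$ abelian and $S$ containing no multiple of the identity other than the identity---into geometric statements about the row-module $M=\langle R(\mathbb{S})\rangle\subseteq\mathbb{Z}_d^{2n}$ relative to the symplectic form $\Lambda$, and then to pit the self-orthogonality of $M$ against a cardinality count coming from the first isomorphism theorem. First I would record that, because $S$ has no nontrivial multiple of the identity, the homomorphism $R$ restricted to $S$ is injective: if $R(g)=R(h)$ then $R(gh^{-1})=0$, so $gh^{-1}=\alpha I$, and the hypothesis forces $gh^{-1}=I$. Hence $|S|=\#R(S)=\#\langle R(\mathbb{S})\rangle=\#M$, which is precisely the identity already quoted in Section~III.

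Next I would exploit commutativity. Since $S$ is abelian, every pair $s_i,s_j$ commutes, so the commute relation~(\ref{comut3}) gives $q_d^{R(s_i)\Lambda R^T(s_j)}=1$, that is $R(s_i)\Lambda R^T(s_j)=0$ in $\mathbb{Z}_d$ for all $i,j$. Reading $R(\mathbb{S})\Lambda$ as the homomorphism $v\mapsto R(\mathbb{S})\Lambda v$ from $\mathbb{Z}_d^{2n}$ to $\mathbb{Z}_d^{r}$, this says exactly that every row of $R(\mathbb{S})$ lies in $Ker(R(\mathbb{S})\Lambda)$, and therefore $M\subseteq Ker(R(\mathbb{S})\Lambda)$. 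This containment is the structural heart of the argument.

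The counting step then proceeds as in Lemma~\ref{lematamanho1}. The matrix $\Lambda$ is invertible over $\mathbb{Z}_d$ (indeed $\Lambda^{-1}=-\Lambda$, since $\Lambda^2=-I$), so right multiplication by $\Lambda$ is a module automorphism and $\#\langle R(\mathbb{S})\Lambda\rangle=\#\langle R(\mathbb{S})\rangle=|S|$. By Proposition~\ref{cardinalidades} the column-module has the same cardinality, so $\#Im(R(\mathbb{S})\Lambda)=|S|$. The first isomorphism theorem~\ref{isomorphismstheorems} gives $Im(R(\mathbb{S})\Lambda)\simeq \mathbb{Z}_d^{2n}/Ker(R(\mathbb{S})\Lambda)$, whence $\#Ker(R(\mathbb{S})\Lambda)=d^{2n}/|S|$. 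Combining this with the containment yields $|S|=\#M\leq \#Ker(R(\mathbb{S})\Lambda)=d^{2n}/|S|$, so $|S|^2\leq d^{2n}$ and therefore $|S|\leq d^n$.

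The step most prone to error is the bookkeeping in the counting paragraph: one must stay consistent about whether $R(\mathbb{S})\Lambda$ is acting on the left (generating rows) or on the right (generating columns), and must correctly identify its kernel with the symplectic orthogonal complement of $M$ so that the self-orthogonality $M\subseteq M^{\perp}$ becomes the inclusion $M\subseteq Ker(R(\mathbb{S})\Lambda)$. Once the invertibility of $\Lambda$ and the equality of row- and column-module cardinalities are in place, the rest is a direct application of results already established in the excerpt.
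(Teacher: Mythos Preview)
Your proof is correct and follows essentially the same argument as the paper's: both use that $\langle R(\mathbb{S})\rangle$ is contained in $Ker(R(\mathbb{S})\Lambda)$ (from commutativity), combine this with $\#Im(R(\mathbb{S})\Lambda)=|S|$ and the first isomorphism theorem to obtain $\#Ker(R(\mathbb{S})\Lambda)=d^{2n}/|S|$, and conclude $|S|\le d^n$. The only cosmetic difference is that the paper phrases it by contradiction (assuming $|S|>d^n$ forces $\#Ker<d^n$, contradicting the containment), whereas you argue directly via $|S|^2\le d^{2n}$.
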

\begin{proof}
The demonstration follows similar to the proof of Lemma~\ref{lematamanho1}. Suppose that $ | S | > d^n $. By the first isomorphism theorem for modules we have
$$Im(R(\mathbb{S})\Lambda)\simeq \frac{\mathbb{Z}_{d}^{2n}}{Ker(R(\mathbb{S})\Lambda)},$$ 
where
$\#Ker(R(\mathbb{S})\Lambda)<d^n$, which is a contradiction because all elements of $ \langle R(\mathbb{S} ) \rangle $ belong to $ Ker(R(\mathbb{S} ) \Lambda ) $.
\end{proof}

The next theorem relates the order of the stabilizer group with the dimension of the stabilized quantum code $ \mathcal{Q} $. To understand it, we will start now an argument that will culminate with the theorem.

All Pauli operator $ P $ is an isomorphism between linear spaces, so if $ \mathcal{Q} $ is a quantum code, $ P \mathcal{Q} $ is a quantum code with the same dimension of $ \mathcal{Q} $. If $ \mathcal{Q} $ is stabilized by $ S = \langle s_1, \ldots, s_r \rangle $ then according to the formalism of stabilizers, $ P \mathcal{Q} $ is stabilized by $ S'= PSP^{\dagger } $. The generators of $ S' $ are
$$ S' = \langle q_{d}^{d- \alpha_1  }s_1, \ldots, q_{d}^{d-\alpha_r }s_r \rangle $$
where the vector $ (d-\alpha_1 \ldots, d-\alpha_r ) $ is obtained using the equation~\ref{comut3} according to the following operation
$$ R(\mathbb{S}) \Lambda R^{T}(P^{\dagger}). $$
If $ \mathcal{Q} $ is stabilized by $ S $, then $ \mathcal{Q} $ is the eigenspace associated to the eigenvalue 1 of each operator $ \mathbb{S} = \{ S_i \} $, then $ P \mathcal{Q} $ is the eigenspace associated with the eigenvalues ​​$ \{q_{d}^{\alpha_i} \} $ of each operator on $ \mathbb{S} $. Considering then the homomorphism between modules represented by the matrix
$$ R ( \mathbb{S} ) \Lambda $$
we have that every element $ \mathbf{x} $ on the image of this homomorphism, $ \mathbf{x} \in Im(R(\mathbb{S}) \Lambda ) $, represents a distinct subspace of $\mathcal{H}_{d}^{n} $. We know that they are distinct because subspaces associated to distinct eigenvalues ​​has only trivial intersection.

By lemmas~\ref{lematamanho1} and \ref{lematamanho2} we know that we can complete the stabilizer group $S=\langle s_1, \ldots, s_r \rangle $ such that $ S' = \langle s_1 , \ldots , s_r , P_1 , \ldots , P_M \rangle $ is a stabilizer group and has order $ | S '| = d^n $. Let $ \mathbb{S}'= \{ s_1 , \ldots , s_r , P_1 , \ldots , P_M \} $. Since $ | S '| = \#\langle R ( \mathbb{S}' ) \rangle = \# \langle R ( \mathbb{S}') \Lambda \rangle $ and the cardinality of the column-module is equal to the row-module, so
$$ \# Im(R(\mathbb{S}')\Lambda) = d^n. $$
As each element $ \mathbf{x} = ( \alpha_1 , \ldots , \alpha_r , \beta_1 , \ldots , \beta_m ) $ of $ Im(R(\mathbb{S}')\Lambda) $ represents a distinct subspace of same dimension, and the dimension of the whole space is $dim(\mathcal{H}_{d}^{n}) = d^n $, it follows that every subspace $ V_\mathbf{x} $ stabilized by $\mathbb{S}'= \langle q_{d}^{ d-\alpha_1  }s_1, \ldots, q_{d}^{d-\alpha_r }s_r, q_{d}^{d-\beta_1 } P_1, \ldots , q_{d}^{d-\beta_m  }P_M \rangle $ has dimension 1 and the union of these ones covers the whole $ \mathcal{H}_{d}^{n} $. Since each $ V_\mathbf{x}$ is a subspace of the space stabilized by $\mathbb{S}=\langle q_{d}^{d-\alpha_1}s_1,\ldots,q_{d}^{d-\alpha_r}s_r\rangle$, they also cover $ \mathcal{H}_{d}^{n} $, and the same has trivial intersection, so the subspace $ \mathcal{Q} $ stabilized by $ S $ has dimension $ dim(Q) = \frac{d^n}{ | S | } $. Thus, we demonstrate the following theorem.
\begin{theorem}\label{STA1}
Let $ S = \langle s_1, \ldots , s_r \rangle $ an abelian subgroup of the Pauli group $ \mathcal{G}_d^n $ where $ \{S_i \}_{ i = 1 }^{ r } $ are independent generators, which does not contain multiples of the  identity than the identity itself. Then the subspace stabilized by $ S $ has dimension $ \frac{d^n}{ | S | } $.
\end{theorem}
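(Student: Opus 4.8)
The plan is to exploit the fact that every Pauli operator acts as a unitary isomorphism on $\mathcal{H}_d^n$, together with the dictionary between Pauli operators and vectors in $\mathbb{Z}_d^{2n}$ supplied by $R$, in order to count the simultaneous eigenspaces of $S$. First I would fix the code $\mathcal{Q}$ stabilized by $S$ and note that for any $P\in\mathcal{G}_d^n$ the conjugate group $PSP^{\dagger}$ stabilizes $P\mathcal{Q}$, which has the same dimension as $\mathcal{Q}$ since $P$ is unitary. By the commutation relation~(\ref{comut3}), the generators of $PSP^{\dagger}$ differ from those of $S$ only through the phases recorded in $R(\mathbb{S})\Lambda R^{T}(P^{\dagger})$, so $P\mathcal{Q}$ is precisely the joint eigenspace of $\mathbb{S}$ whose eigenvalue vector is this product. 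Hence each $\mathbf{x}\in Im(R(\mathbb{S})\Lambda)$ labels a joint eigenspace of $S$; since eigenspaces attached to distinct eigenvalues of a normal (here unitary) operator are orthogonal, distinct labels give subspaces with trivial intersection, and all of them are isomorphic images of $\mathcal{Q}$, hence share the common dimension $\dim\mathcal{Q}$.

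Next I would count these eigenspaces. Because $\Lambda$ is invertible, replacing $R(\mathbb{S})$ by $R(\mathbb{S})\Lambda$ leaves the cardinality of the row-module unchanged, and by Proposition~\ref{cardinalidades} the row- and column-module cardinalities agree; therefore $\#Im(R(\mathbb{S})\Lambda)=\#\langle R(\mathbb{S})\rangle=|S|$. Thus there are exactly $|S|$ pairwise-orthogonal eigenspaces of $S$, each of dimension $\dim\mathcal{Q}$.

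The remaining point, and the real content, is that these eigenspaces exhaust $\mathcal{H}_d^n$. For this I would enlarge $S$: by Lemmas~\ref{lematamanho1} and \ref{lematamanho2} one adjoins commuting elements $P_1,\dots,P_M$ until $\mathbb{S}'=\{s_1,\dots,s_r,P_1,\dots,P_M\}$ generates an abelian group $S'\supseteq S$ of maximal order $d^n$ with no nontrivial multiple of the identity. The same count applied to $S'$ yields $\#Im(R(\mathbb{S}')\Lambda)=d^n$ pairwise-orthogonal, equidimensional joint eigenspaces $V_\mathbf{x}$. Since commuting unitaries are simultaneously diagonalizable, at least one joint eigenspace is nonzero, so every conjugate $PV$ of it is nonzero; being orthogonal, equidimensional, and $d^n$ in number inside a space of dimension $d^n$, each $V_\mathbf{x}$ must be one-dimensional and their union is all of $\mathcal{H}_d^n$. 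Finally, each $V_\mathbf{x}$ sits inside a joint eigenspace of the sub-collection $\mathbb{S}$, so the $|S|$ eigenspaces of $S$ also cover $\mathcal{H}_d^n$; being orthogonal and equidimensional they satisfy $|S|\cdot\dim\mathcal{Q}=d^n$, which gives $\dim\mathcal{Q}=d^n/|S|$.

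I expect the main obstacle to be precisely the covering step: making rigorous that the eigenspaces reachable by Pauli conjugation exhaust the full simultaneous spectral decomposition, rather than forming only a sub-collection of it. The clean way around this is the order-$d^n$ completion, where the counting forces every reachable eigenspace to be one-dimensional and hence forces the reachable family to coincide with the entire decomposition; the independence of the generators together with Lemma~\ref{lematamanho2} are exactly what guarantee that such a maximal completion exists and has order exactly $d^n$.
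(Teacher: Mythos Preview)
Your proposal is correct and follows essentially the same route as the paper's own argument: conjugate $\mathcal{Q}$ by Pauli operators to obtain equidimensional joint eigenspaces indexed by $Im(R(\mathbb{S})\Lambda)$, then complete $S$ via Lemmas~\ref{lematamanho1} and \ref{lematamanho2} to a maximal abelian group $S'$ of order $d^n$ so that the $d^n$ eigenspaces of $S'$ are forced to be one-dimensional and to tile $\mathcal{H}_d^n$, and finally descend back to $S$. If anything, you are slightly more explicit than the paper about why at least one joint eigenspace of $S'$ is nonzero (simultaneous diagonalizability of commuting unitaries), which is the point the paper passes over silently.
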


We will get now three important corollaries of the preceding theorem. The first two are used in the proof of Theorem~\ref{QuaCla}. The third result establishes the number of generators of $ S $ if $ d $ is prime.

\begin{corollary}\label{COLCARD}
Let $ S = \langle s_1 , \ldots , s_r \rangle $ be an abelian subgroup of the Pauli group $ \mathcal{G}_d^ n $ not containing multiple oh the identity than the identity itself. If $ | S | = d^n $ then $ S $ is a maximal set of Pauli operators that stabilizes a single state $ \dirac{\psi} $.
\end{corollary}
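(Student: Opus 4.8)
The plan is to split the statement into two parts: first, that $S$ stabilizes a single state, and second, that $S$ cannot be enlarged by any further Pauli operator fixing that state.

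For the first part, Theorem~\ref{STA1} gives that the subspace stabilized by $S$ has dimension $\frac{d^n}{|S|}$. Since $|S| = d^n$ by hypothesis, this dimension equals $1$, so the common $+1$-eigenspace of all elements of $S$ is spanned by a single vector $\dirac{\psi}$ (determined up to global phase). Thus $s_i\dirac{\psi} = \dirac{\psi}$ for every generator, and $S$ stabilizes the single state $\dirac{\psi}$.

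For maximality, I would argue by contradiction. Suppose there is a Pauli operator $P \in \mathcal{G}_d^n$ with $P\dirac{\psi} = \dirac{\psi}$ but $P \notin S$, and form the group $\overline{S} = \langle s_1,\ldots,s_r,P\rangle$, which strictly contains $S$. The key step is to show that $\overline{S}$ is abelian and contains no multiple of the identity except the identity itself; both facts follow from the single observation that every element of $\overline{S}$ fixes $\dirac{\psi}$. Indeed, applying the commutation relation (\ref{comut3}) to $\dirac{\psi}$, if $Ps_i = q_d^{\,j}s_iP$ then $\dirac{\psi} = Ps_i\dirac{\psi} = q_d^{\,j}s_iP\dirac{\psi} = q_d^{\,j}\dirac{\psi}$, forcing $q_d^{\,j} = 1$, so $P$ commutes with each $s_i$. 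Likewise, any element $\alpha I \in \overline{S}$ is a product of operators fixing $\dirac{\psi}$, hence $\alpha I\dirac{\psi} = \dirac{\psi}$ gives $\alpha = 1$.

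Having established that $\overline{S}$ is an abelian subgroup with no nontrivial multiple of the identity, Lemma~\ref{lematamanho2} yields $|\overline{S}| \leq d^n = |S|$. But $\overline{S} \supsetneq S$ forces $|\overline{S}| > |S|$, a contradiction. Hence no such $P$ exists, and $S$ is a maximal set of Pauli operators stabilizing $\dirac{\psi}$. I expect the main obstacle to be the commutation step --- verifying that fixing a common state forces the symplectic pairing to vanish --- since everything else reduces to the dimension count of Theorem~\ref{STA1} and the order bound of Lemma~\ref{lematamanho2}.
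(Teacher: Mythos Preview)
Your proof is correct and follows essentially the same route as the paper: Theorem~\ref{STA1} gives the single stabilized state, then assuming a further stabilizing operator $P\notin S$ you show via the eigenvalue trick $\dirac{\psi}=Ps_i\dirac{\psi}=q_d^{\,j}s_iP\dirac{\psi}=q_d^{\,j}\dirac{\psi}$ that $P$ commutes with $S$, and derive a contradiction on the order of the enlarged group. The only cosmetic difference is that the paper invokes Theorem~\ref{STA1} again for the contradiction (the enlarged group would stabilize a space of dimension $<1$), whereas you invoke Lemma~\ref{lematamanho2} directly; your handling of the ``no nontrivial multiple of the identity'' step is in fact slightly cleaner, since you argue it for all elements of $\overline{S}$ at once rather than only for powers of $P$.
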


This corollary says that every Pauli operator $ P \in \mathcal{G}_{d}^{n} $ stabilizing $ \dirac{\psi} $ is in $ S $. The proof is given below.

\begin{proof}
By Theorem~\ref{STA1} we have $ S $ stabilizes a single state $\dirac{\psi}$. Suppose there is $ P\in \mathcal{G}_d^n $ and $ P \notin S $ that stabilizes $ \dirac{\psi} $. Clearly $P^t$ also stabilizes $ \dirac{\psi} $ for any $ t \in \mathbb{N} $, so there is no $ \overline{t} \in \mathbb{N} $ such that $ P^{ \overline{t}} = \alpha I $ with $ \alpha \neq $ 1. In addition, $ P $ commutes with all elements of $ S $ since otherwise, there would be $ s \in S $ and $ \beta \neq $ 1 such that $$\dirac{\psi}=P\dirac{\psi}=Ps\dirac{\psi}=\beta s P \dirac{\psi}=\beta \dirac{\psi}$$
which may not occur. Therefore, $ S = \langle s_1, \ldots , s_r , P \rangle $ is an abelian group not containing multiple of the identity than the identity with $ | S | > d^n $ and stabilizes $ \dirac{\psi} $, a contradiction to Theorem~\ref{STA1}.
\end{proof}

\begin{corollary}\label{maximal}
Under the same assumptions, unless phase, $ S $ is a maximal set of Abelian operators.
\end{corollary}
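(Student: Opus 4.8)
The plan is to prove maximality by showing that any Pauli operator commuting with every element of $S$ must already lie in $S$ up to a global phase; this is exactly the assertion that $S$ cannot be enlarged to a strictly bigger abelian set. First I would translate the commuting condition into symplectic language. By the commutation relation (\ref{comut3}) together with the antisymmetry of $\Lambda$, an operator $P\in\mathcal{G}_d^n$ commutes with all of $S$ if and only if $R(\mathbb{S})\Lambda R^T(P)=0$, that is, $R^T(P)\in Ker(R(\mathbb{S})\Lambda)$. Since $R$ is surjective, the centralizer of $S$ (disregarding phase) is captured \emph{exactly} by the kernel of the homomorphism $x\mapsto R(\mathbb{S})\Lambda x$ on $\mathbb{Z}_d^{2n}$, so the whole problem reduces to understanding that kernel.

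Next I would count. The operator $\Lambda$ does not change the cardinality of the row-module, and by Proposition~\ref{cardinalidades} the row- and column-modules of any matrix over $\mathbb{Z}_d$ have equal cardinality; hence $\#Im(R(\mathbb{S})\Lambda)=\#\langle R(\mathbb{S})\Lambda\rangle=\#\langle R(\mathbb{S})\rangle=|S|=d^n$. Applying the first isomorphism theorem (Theorem~\ref{isomorphismstheorems}) to $R(\mathbb{S})\Lambda:\mathbb{Z}_d^{2n}\to\mathbb{Z}_d^{r}$ gives $Im(R(\mathbb{S})\Lambda)\simeq \mathbb{Z}_d^{2n}/Ker(R(\mathbb{S})\Lambda)$, so $\#Ker(R(\mathbb{S})\Lambda)=d^{2n}/d^n=d^n$.

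The key observation is that $S$ already saturates this kernel. Because $S$ is abelian, every $s\in S$ commutes with every element of $S$, so $R^T(s)\in Ker(R(\mathbb{S})\Lambda)$; in other words the transpose of $\langle R(\mathbb{S})\rangle$ sits inside $Ker(R(\mathbb{S})\Lambda)$. Since $S$ contains no nontrivial multiple of the identity, $R$ is injective on $S$, whence $\#\langle R(\mathbb{S})\rangle=|S|=d^n=\#Ker(R(\mathbb{S})\Lambda)$; an inclusion of finite sets of equal size is an equality. Therefore, if $P$ commutes with all of $S$, then $R(P)\in\langle R(\mathbb{S})\rangle=R(S)$, so $R(P)=R(s)$ for some $s\in S$, and as $R$ forgets only the global phase we conclude $P=\alpha s\in S$ up to phase. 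This is precisely maximality.

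I expect the only delicate point to be the bookkeeping between row- and column-modules: one must match the vectors representing elements of $S$ (the rows of $R(\mathbb{S})$) with the kernel of $R(\mathbb{S})\Lambda$ (column vectors) via transposition, and verify that Proposition~\ref{cardinalidades} together with the invariance of cardinality under $\Lambda$ genuinely identifies $\#\langle R(\mathbb{S})\rangle$ with $\#Im(R(\mathbb{S})\Lambda)$; once these counts are lined up the two $d^n$-element sets coincide with no further computation. As an alternative that parallels the proof of Corollary~\ref{COLCARD}, one could instead argue by contradiction: a commuting $P\notin S$ could, after rescaling its phase as in Lemma~\ref{lematamanho1}, be adjoined to form an abelian group $\langle s_1,\ldots,s_r,P\rangle$ of order strictly exceeding $d^n$ with no nontrivial multiples of the identity, contradicting Lemma~\ref{lematamanho2}; I prefer the kernel-equality route above since it is self-contained and avoids the phase-normalization step.
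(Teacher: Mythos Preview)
Your argument is correct, but it is not the route the paper takes. The paper's proof is geometric: given $P$ commuting with all of $S$, it observes that $S$ stabilizes both $\dirac{\psi}$ and $P\dirac{\psi}$; since Theorem~\ref{STA1} forces the stabilized space to be one-dimensional, $P\dirac{\psi}=\alpha\dirac{\psi}$, so $\alpha^{\dagger}P$ stabilizes $\dirac{\psi}$ and Corollary~\ref{COLCARD} then places $\alpha^{\dagger}P$ in $S$. Your proof, by contrast, is purely symplectic: you count $\#Ker(R(\mathbb{S})\Lambda)=d^n$ via Proposition~\ref{cardinalidades} and the first isomorphism theorem, note that abelianness gives the inclusion $\langle R(\mathbb{S})\rangle\subset Ker(R(\mathbb{S})\Lambda)$, and conclude equality by matching cardinalities. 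What you gain is independence from the stabilized-state machinery (you never invoke Theorem~\ref{STA1} or Corollary~\ref{COLCARD}), and you make the identity $\langle R(\mathbb{S})\rangle=Ker(R(\mathbb{S})\Lambda)$ explicit---exactly the identity the paper later uses without separate proof inside Theorem~\ref{LinearisSta}. The paper's approach, on the other hand, is shorter once Theorem~\ref{STA1} and Corollary~\ref{COLCARD} are in hand and keeps the argument tied to the underlying Hilbert-space picture. Your alternative contradiction route via Lemmas~\ref{lematamanho1} and~\ref{lematamanho2} is also valid and is yet a third distinct path.
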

\begin{proof}
Suppose that $ P \in \mathcal{G}_d^n $ is a Pauli operator that commutes with all elements of $ S $. Then it follows that $ S $ stabilizes $ \dirac{ \psi } $ and $ P \dirac{\psi} $, but these vectors can not be linearly independent by Theorem~\ref{STA1}. Logo $ P \dirac{\psi } = \alpha \dirac{\psi} $, e.g, the operator $\alpha^{\dagger}$ stabilizes $ P \dirac{\psi} $. By the previous corollary, it follows that $ \alpha^{\dagger} P \in S $.
\end{proof}

\begin{corollary}
Let $ S = \langle s_1 , \ldots , s_r \rangle $ an abelian subgroup of the Pauli group $ \mathcal{G}_d^n $ not containing multiples of the identity than the identity itself and $ d $ prime. Then $ S $ stabilizes a single state $ \dirac{\psi} $ if and only if $ r = n $.
\end{corollary}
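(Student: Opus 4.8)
The plan is to reduce the statement to Theorem~\ref{STA1} and then evaluate $|S|$ exactly, exploiting that $\mathbb{Z}_d$ is a field when $d$ is prime. First I would note that, by Theorem~\ref{STA1}, the subspace stabilized by $S$ has dimension $d^n/|S|$; this subspace is a single state $\dirac{\psi}$ (up to global phase) exactly when its dimension is $1$, i.e.\ when $|S|=d^n$. Hence the corollary is equivalent to showing $|S|=d^n$ if and only if $r=n$, and the whole problem reduces to computing $|S|$ in terms of $r$.

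This is where primality enters. Every non-identity $g=\alpha Z^{\mathbf{V}}X^{\mathbf{U}}\in S$ satisfies $g^d=\beta I$ for some phase $\beta$, since $d\mathbf{V}=d\mathbf{U}=\mathbf{0}$ in $\mathbb{Z}_d^n$; the hypothesis that $S$ contains no multiple of the identity other than $I$ forces $g^d=I$, and primality of $d$ then leaves only the orders $1$ and $d$. Thus $S$ is an elementary abelian $d$-group, that is, a vector space over the field $\mathbb{Z}_d$, so $|S|=d^{k}$ with $k=\dim_{\mathbb{Z}_d}S$. Because $s_1,\dots,s_r$ are independent generators they form a basis, giving $k=r$ and $|S|=d^{r}$. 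Equivalently, using Definition~\ref{funcaoR}, one checks that a relation $\sum_i a_iR(s_i)=0$ reads $R\big(\prod_i s_i^{a_i}\big)=0$, so $\prod_i s_i^{a_i}$ is a multiple of the identity and hence equals $I$; therefore the rows $R(s_1),\dots,R(s_r)$ are linearly independent over the field $\mathbb{Z}_d$ and $|S|=\#\langle R(\mathbb{S})\rangle=d^{r}$.

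Combining $|S|=d^{r}$ with Theorem~\ref{STA1}, the stabilized subspace has dimension $d^{n}/d^{r}=d^{\,n-r}$, which equals $1$ precisely when $r=n$; this yields both implications at once. The step I expect to be the main obstacle is pinning down $|S|=d^{r}$, since this is exactly where primality is indispensable: if $d$ were composite then $\mathbb{Z}_d$ would not be a field, $S$ need not be elementary abelian, the row-module $\langle R(\mathbb{S})\rangle$ need not be free, and $|S|$ need not take the clean form $d^{r}$, so the counting argument and with it the equivalence $r=n$ would break down.
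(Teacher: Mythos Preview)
Your proof is correct and follows essentially the same route as the paper: reduce via Theorem~\ref{STA1} to the equality $|S|=d^{r}$, and obtain this from the fact that for prime $d$ every non-identity element of $S$ has order $d$. The paper's proof records this in one line (``the order of each generator is $d$, and it follows that $|S|=d^{r}$''), whereas you spell out the elementary-abelian/vector-space structure and the equivalent linear-independence argument for the rows $R(s_i)$; these are simply more detailed justifications of the same step, not a different approach.
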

\begin{proof}
If $ d $ is prime, then the order of each generator is $o(s_i)=d \;\forall i$ , and it follows that $ | S | = d^r $. Then $ S $ stabilizes a single state if and only if $ r = n $.
\end{proof}

For $ d $ not prime, we may have a generator $ S_i $ of $ S $ with order less than $ d $, so that the required amount $ r $ of generators is greater than $ n $. The maximum number of generators is $ 2n $, as cited in \cite{hostens2005}, $ n \leq r \leq 2n $.

\section{CWS codes and Stabilizers codes}\label{CWSestabilizadores}

This section establishes relationships between CWS codes and stabilizer codes. There are several examples of codes that are not built with the CWS formalism, as we see in \cite{Grassl1997}, \cite{Grassl2008}, \cite{Grassl2008a} and \cite{Smolin2007}. There are also several CWS codes that are not stabilizers, how can we check in \cite{Chen2008}, \cite{Cross2009}, \cite{Yu2008}, \cite{Yu2007}, \cite{Yu2009}. Every stabilizer code is in fact a CWS code and all CWS code with \textit{codewords} $ W $ forming a group, is a stabilizer code. These results are shown in the binary case \cite{Cross2009} and for \textit{graph states} for any $ d $ \cite{Looi2007}, but we did not found in the literature a general statement, valid for any $ d $, and being not based on \textit{graph states}, so we did a demonstration based on the structure of the parity check matrix (Definition~\ref{MatrizV}). Given a set of Pauli operators $ \mathcal{C}$, the parity check matrix with coefficients in $ \mathbb{Z}_d $, $ R( \mathcal{C}) $. If the number of operators in $\mathcal{C} $ is $ l $, $ R(\mathcal{C}) $ represents a homomorphism between $ \mathbb{Z}_d $-modules, $ \mathbb{Z}_d^{2n} \rightarrow \mathbb{Z}_d^l $, so it makes sense to speak of kernel and image modules, respectively $ Ker(R(\mathcal{C})) $ and $ Im(R(\mathcal{C})) $. 
%The $ Im ( R ( \mathcal{ C } ) ) $ module is the module generated by the columns of $ R ( \mathcal{ C } ) $ while to denote the %module generated by the rows of $ R ( \mathcal{ C } ) $, we choose the notation $ \langle R ( \mathcal{ C } ) \rangle $.

\begin{lemma}\label{lemaLinearisSta}
Let $ \mathcal{ Q } $ be a CWS code with stabilizer  $S$ generated by $ \mathbb{ S } = \{ s_1, \ldots, s_r \} $ and \textit{codewords} $ W = \{ w_i \}_{ i = 1 }^{ K } $. Then the cardinality of the centralizer of $ W $ in $ S $, $ C_S( W )$ and the cardinality $ \mathbb{ Z }_d $-module $ \langle R ( \mathbb{ S } ) \rangle \bigcap Ker(R(W)\Lambda ) $ are the same, e.g
$$ \# C_S ( W ) = \# \langle R ( \mathbb{ S } ) \rangle \bigcap Ker ( R ( W ) \Lambda ) $$
\end{lemma}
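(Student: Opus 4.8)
The plan is to set up an explicit group isomorphism between the stabilizer group $S$ and its row-module $\langle R(\mathbb{S})\rangle$, and then to show that under this isomorphism the centralizer $C_S(W)$ corresponds exactly to the submodule $\langle R(\mathbb{S})\rangle \cap Ker(R(W)\Lambda)$. Since an isomorphism preserves cardinalities, the claimed equality is then immediate.

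First I would record the arithmetic meaning of commutation. By the commute relation (\ref{comut3}), two Pauli operators $P_1,P_2$ commute if and only if $R(P_1)\Lambda R^T(P_2)=0$ in $\mathbb{Z}_d$. Reading the rows of $R(W)$ as the vectors $R(w_i)$, the product $R(W)\Lambda R^T(s)$ is the column vector in $\mathbb{Z}_d^K$ whose $i$-th entry is $R(w_i)\Lambda R^T(s)$. Hence an element $s\in S$ commutes with every $w_i$ --- that is, $s\in C_S(W)$ --- if and only if $R(W)\Lambda R^T(s)=0$, i.e. if and only if $R(s)\in Ker(R(W)\Lambda)$.

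Next I would establish that the restriction $\phi=R|_S:S\rightarrow \langle R(\mathbb{S})\rangle$ is a group isomorphism. It is a homomorphism because $R$ is (recall $R(g_1g_2)=R(g_1)+R(g_2)$), and surjective because any element of the row-module is a $\mathbb{Z}_d$-combination $\sum_i c_iR(s_i)=R(\prod_i s_i^{c_i})$, the image of an element of $S$. Injectivity is where the hypothesis on $S$ enters: $R(s)=0$ forces $s=\alpha I$ for some phase $\alpha$, and since $S$ contains no multiple of the identity other than $I$ itself, $Ker(\phi)=\{I\}$. Thus $\phi$ is a bijection, and in particular $|S|=\#\langle R(\mathbb{S})\rangle$. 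Finally I would combine the two observations: since $C_S(W)$ is a subgroup of $S$ and $\phi$ is an isomorphism, it suffices to verify $\phi(C_S(W))=\langle R(\mathbb{S})\rangle \cap Ker(R(W)\Lambda)$. The inclusion $\subseteq$ is immediate, for $s\in C_S(W)$ has $R(s)$ in $\langle R(\mathbb{S})\rangle$ (image of $\phi$) and in $Ker(R(W)\Lambda)$ (by the first step); conversely any $\mathbf{x}$ in the intersection equals $R(s)$ for a unique $s\in S$ by surjectivity, and $R(s)\in Ker(R(W)\Lambda)$ forces $s\in C_S(W)$. Taking cardinalities through the bijection $\phi$ yields the result.

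The main obstacle is not any single computation but making the correspondence airtight: one must handle the row/column convention in the symplectic product carefully so that ``$s$ commutes with every codeword'' is matched precisely with membership in $Ker(R(W)\Lambda)$, and one must invoke the no-nontrivial-phase hypothesis at exactly the right place to secure injectivity of $\phi$ --- without it, $R|_S$ would only be surjective and the two cardinalities could differ by the order of the phase subgroup. Everything else is a formal consequence of $R$ being a homomorphism and of cardinality being preserved under isomorphism.
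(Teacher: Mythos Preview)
Your argument is correct and is essentially the same as the paper's: both establish the bijection $R|_{C_S(W)}:C_S(W)\to\langle R(\mathbb{S})\rangle\cap Ker(R(W)\Lambda)$, invoking the no-nontrivial-phase hypothesis for injectivity and the commute relation~(\ref{comut3}) to translate centralizing into membership in the kernel. The only cosmetic difference is that you first build the ambient isomorphism $\phi:S\to\langle R(\mathbb{S})\rangle$ and then restrict, whereas the paper defines the map directly on $C_S(W)$ and checks bijectivity there.
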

\begin{proof}
It suffices to show that the function
$$
\begin{array}{cccc}
f: & C_S(W) & \rightarrow & R(\mathbb{S})\rangle \bigcap Ker(R(W)\Lambda)\\
& g & \mapsto & R(g)
\end{array}
$$
is well defined, and is bijective.
\begin{enumerate}
\item $ f $ is well defined because if $ g_1\neq g_2 \in C_S ( W ) $, then $ R ( g_1^{ \dagger} g_2 ) \neq \mathbf{ 0 } $ and so $ R ( g_1 ) \neq R( g_2 ) $.
\item $ f $ is injective, because if $ R( g_1 ) = R ( g_2 ) $ then $ R ( g_1^{\dagger} g_2 ) = \mathbf{0} $, then $ g_1^{\dagger} g_2 = \alpha I $ and $ \alpha = 1$ because there is no multiple of the identity other than the identity itself in $ S $. 
\item $ f $ is surjective. Let $ \mathbf{v} \in \langle R(\mathbb{ S } ) \rangle \bigcap Ker( R ( W ) \Lambda ) $. There is $ \overline{ g } \in \mathcal{G}_d^n $ such that $ R( \overline{ g } ) = \mathbf{ v } $. As $ \mathbf{ v } \in \langle R( \mathbb{ S } ) \rangle $ and up to phase phase $ \langle R( \mathbb{ S } ) \rangle $ is a maximal abelian set, there is $ g = \alpha \overline{ g } $ with $ g \in S $ and $ R ( g ) = \mathbf{ v } $.

As $ \mathbf{ v } \in Ker( R ( W ) \Lambda ) $, $ R ( W ) \Lambda R^{T}( g ) = \mathbf{ 0 } $ it follows that $ g $ commutes with all elements of $ W $, then $ g \in C_S ( W ) $.
\end{enumerate}
\end{proof}

\begin{theorem}\label{LinearisSta}
Let $ \mathbb{ Q } $ be a CWS code with stabilizer  $S$ generated by $ \mathbb{ S } = \{ s_1, \ldots, s_r \} $. and codeword operators $ W = \{ w_i \}_{ i = 1 }^{ K } $ with $ w_1 =I $  \phantom{a} \footnote{ This condition is not restrictive since all CWS code is equivalent to a  $w_1 = I $. Just do  $ w_i'= w_{ 1 }^{ \dagger }w_i $.}. Then $ \mathbb{ Q } $ is a stabilizer code if and only if it satisfies $ \frac{ \# \langle R( W ) \rangle } { \# (\langle R( W ) \rangle \cap \langle R ( \mathbb{ S }) \rangle) } = K $.
\end{theorem}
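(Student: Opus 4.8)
The plan is to identify the largest group of Pauli operators fixing the code $\mathbb{Q}$ pointwise, show it equals the centralizer $C_S(W)$ of the codewords in $S$, and then convert the resulting dimension equality into the module identity claimed. Write $M=\langle R(\mathbb{S})\rangle$ and $L=\langle R(W)\rangle$, so $\#M=|S|=d^n$ and $\dim\mathbb{Q}=K$ (the $w_i\dirac{\psi}$ form a basis). First I would prove that a Pauli operator $g$ satisfies $gv=v$ for every $v\in\mathbb{Q}$ if and only if $g\in C_S(W)$: using $w_1=I$ one gets $g\dirac{\psi}=\dirac{\psi}$, hence $g\in S$ up to phase by Corollary~\ref{COLCARD}; then writing $gw_i=\gamma_i w_ig$ and evaluating on $\dirac{\psi}$ forces each phase $\gamma_i=1$, i.e. $g$ commutes with all $w_i$, and the converse is immediate. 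Consequently the $+1$-eigenspace $E$ of $C_S(W)$ contains $\mathbb{Q}$, and $\mathbb{Q}$ is a stabilizer code if and only if $\mathbb{Q}=E$: if $\mathbb{Q}$ is the joint $+1$-eigenspace of some abelian $T$, then $T$ fixes $\mathbb{Q}$, so $T\subseteq C_S(W)$ and therefore $E\subseteq\mathbb{Q}\subseteq E$. Since $\mathbb{Q}\subseteq E$ always holds, this reduces to $\dim\mathbb{Q}=\dim E$, and by Theorem~\ref{STA1} applied to $C_S(W)$ this means $K=\frac{d^n}{\#C_S(W)}$.

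Next I would convert $\#C_S(W)$ into module quantities. By Lemma~\ref{lemaLinearisSta}, $\#C_S(W)=\#\bigl(M\cap Ker(R(W)\Lambda)\bigr)$, and $Ker(R(W)\Lambda)$ is exactly the symplectic complement $L^{\perp}=\{v:R(w_i)\Lambda v^{T}=0\ \forall i\}$. Consider the $\mathbb{Z}_d$-module homomorphism $\phi:M\to\mathbb{Z}_d^{K}$ given by $\phi(v)=R(W)\Lambda v^{T}$, whose kernel is $M\cap L^{\perp}$. The first isomorphism theorem (Theorem~\ref{isomorphismstheorems}) then gives $\#Im(\phi)=\frac{\#M}{\#(M\cap L^{\perp})}=\frac{d^n}{\#C_S(W)}$, so the stabilizer-code condition becomes simply $\#Im(\phi)=K$.

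It then remains to prove the identity $\#Im(\phi)=\frac{\#L}{\#(L\cap M)}$, which turns $\#Im(\phi)=K$ into the statement of the theorem. The key structural fact is that $M$ is Lagrangian, $M=M^{\perp}$: since $S$ is abelian every pair of rows of $R(\mathbb{S})$ has vanishing symplectic product, so $M\subseteq M^{\perp}$, while $\#M=\#M^{\perp}=d^n$ forces equality. Using $M^{\perp}=M$ and the elementary identity $(M+L)^{\perp}=M^{\perp}\cap L^{\perp}$, I would write $M\cap L^{\perp}=(M+L)^{\perp}$, hence $\#(M\cap L^{\perp})=\frac{d^{2n}}{\#(M+L)}$; combining with the second isomorphism theorem $\#(M+L)=\frac{\#M\,\#L}{\#(M\cap L)}=\frac{d^n\#L}{\#(M\cap L)}$ yields $\#(M\cap L^{\perp})=\frac{d^n\#(M\cap L)}{\#L}$, and therefore $\#Im(\phi)=\frac{d^n}{\#(M\cap L^{\perp})}=\frac{\#L}{\#(L\cap M)}$. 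Chaining the equivalences, $\mathbb{Q}$ is a stabilizer code $\iff K=\frac{d^n}{\#C_S(W)}\iff\#Im(\phi)=K\iff\frac{\#\langle R(W)\rangle}{\#(\langle R(W)\rangle\cap\langle R(\mathbb{S})\rangle)}=K$.

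The step I expect to be the main obstacle is the symplectic duality over $\mathbb{Z}_d$ when $d$ is not prime, where $\mathbb{Z}_d^{2n}$ is only a module: namely the facts $\#X\cdot\#X^{\perp}=d^{2n}$ and $(X^{\perp})^{\perp}=X$ that I invoke for $M$ and in the complement manipulations. Over a field these are mere dimension counts, but here I would derive $\#X\cdot\#X^{\perp}=d^{2n}$ from Proposition~\ref{cardinalidades}: if the rows of a matrix $G$ generate $X$, then $X^{\perp}=Ker(G\Lambda)$, so the first isomorphism theorem gives $\#X^{\perp}=d^{2n}/\#Im(G\Lambda)$, and Proposition~\ref{cardinalidades} together with the invertibility of $\Lambda$ yields $\#Im(G\Lambda)=\#\langle G\Lambda\rangle=\#\langle G\rangle=\#X$; the double-complement identity follows by the same cardinality count. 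The remaining care is purely bookkeeping with global phases in the identification of the maximal stabilizer, which the hypothesis $w_1=I$ renders routine.
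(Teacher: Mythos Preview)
Your proposal is correct and follows essentially the same route as the paper: both identify the candidate stabilizer as $C_S(W)$, reduce via Theorem~\ref{STA1} to $K=d^n/\#C_S(W)$, invoke Lemma~\ref{lemaLinearisSta} and the Lagrangian identity $\langle R(\mathbb{S})\rangle = Ker(R(\mathbb{S})\Lambda)$, and then compute $\#(M\cap L^{\perp})$ by combining the second isomorphism theorem with the row/column cardinality equality of Proposition~\ref{cardinalidades}. Your symplectic-complement packaging $(M+L)^{\perp}=M\cap L^{\perp}$ together with $\#X\cdot\#X^{\perp}=d^{2n}$ is exactly the paper's stacked-matrix computation $Ker\!\left[\begin{smallmatrix}R(\mathbb{S})\Lambda\\ R(W)\Lambda\end{smallmatrix}\right]$ rewritten, and (as you note) the needed duality over $\mathbb{Z}_d$ is precisely what Proposition~\ref{cardinalidades} plus the first isomorphism theorem provide; the double-complement identity $(X^{\perp})^{\perp}=X$ is in fact never used in your chain.
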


\begin{proof}
Let $ \dirac{\psi} $ be the state stabilized by $ S $ and $ \beta = \{ w_i \dirac{ \psi } \}_{ i = 1 }^K $ a basis for $ \mathbb{ Q } $.  $ \mathbb{ Q }$ is a stabilizer code if and only if there exists a abelian subgroup $ H \leq \mathcal{G}_d^n $, not containing multiples of the identity than the identity itself that stabilizes $ \mathbb{ Q } $. In particular $ H $ need to stabilize $ \dirac{\psi } $. How $ S $ is a maximal subgroup that stabilizes $ \dirac{\psi} $ (Corollary~\ref{COLCARD}), then $ H \leq S $. Moreover, every element $ h \in H $ must satisfy $ hw_i= w_ih $ for all $ i $, then the subgroup $ H $ is the centralizer of $ W $ in $ S $, e.g $ H = C_S ( W ) $. It remains to show that $ \dfrac{d^n}{| C_S ( W ) | } = \frac{ \# \langle R ( W ) \rangle}{ \# \langle R ( W ) \rangle \cap \langle R ( \mathbb{ S } ) \rangle } $, so according to Theorem~\ref{STA1}, $C_S( W ) $ stabilizes $ \mathbb{Q} $ if and only if $ \frac{ \# \langle R ( W ) \rangle } { \# \langle R ( W ) \rangle \cap \langle R ( \mathbb{ S } ) \rangle } = K $.

According to Lemma~\ref{lemaLinearisSta}, ​​we have
$$ \# C_S ( W ) = \# \langle R ( \mathbb{ S } ) \rangle \bigcap Ker( R ( W ) \Lambda ) $$

Since $ S $ is up to phase a maximal abelian set in $ \mathcal{G}_d^n $ ( Corollary~\ref{maximal}), we have $ \langle R( \mathbb{ S } ) \rangle = Ker(R (\mathbb{ S } ) \Lambda) $, from which it follows that

\begin{eqnarray*}
 \langle R(\mathbb{S}) \rangle \bigcap Ker(R(W)\Lambda)&=& Ker(R(\mathbb{S})\Lambda) \bigcap Ker(R(W)\Lambda)\\
 & =& Ker(M) 
\end{eqnarray*}

where $M=\left[\begin{array}{c}
R(\mathbb{S})\Lambda \\ R(W)\Lambda
\end{array}\right]=\left[\begin{array}{c}
R(\mathbb{S}) \\ R(W)
\end{array}\right]\Lambda$. 
Then we estimate $ \# Ker ( M ) $.

 We have $ \langle M \rangle = \langle R ( \mathbb{ S } ) \Lambda \rangle + \langle R ( W ) \Lambda \rangle $. By the second isomorphism theorem for modules, we have
$$\frac{\langle R(\mathbb{S})\Lambda \rangle + \langle R(W)\Lambda \rangle}{R(\mathbb{S})\Lambda}\simeq \frac{\langle R(W)\Lambda \rangle}{\langle R(W)\Lambda \rangle \cap \langle R(\mathbb{S})\Lambda \rangle},$$
  and how the operator $ \Lambda $ does not change the cardinality of the row-module, we have $ \# \langle M \rangle = \frac{ \# \langle R ( \mathbb{ S } ) \rangle \# \langle R ( W ) \rangle } { \# \langle R ( W ) \rangle \cap \langle R ( \mathbb{ S } ) \rangle } $ and therefore as $ \frac{ \# \langle R ( W ) \rangle }{ \# \langle R ( W ) \rangle \cap \langle R ( \mathbb{ S } ) \rangle } = K $ and $ \# \langle R ( \mathbb{ S } ) \rangle = | S | = d^n $, we have $ \# \langle M \rangle = Kd^n $.
As seen, the cardinality of the row-module is equal to the cardinality of the column-module, so $ \# Im( M ) = Kd^n $. Finally, by the first isomorphism theorem, we have $\#Ker(M)=\dfrac{d^{2n}}{Kd^n}=\frac{d^n}{K}$.
\end{proof}

{\bf Example}

Take the $((3,3,2))_3$ code  with stabilizer $S = \langle s_1, s_2, s_3 \rangle $ where $ s_1 = XZI $, $ s_2 = ZXZ $ and  $s_3=IZX $ and \textit{codewords} $ W = \{ I , ( XZ ) \otimes Z \otimes Z^2,(XZ^2) \otimes Z \otimes Z \} $. We have respectively:
\begin{align*}
R(\mathbb{S})=\left[\begin{array}{cccccc}
0 & 1 &0 &1& 0& 0 \\ 
1 &0 &1 &0 &1 &0 \\
0 &1 &0 &0 &0 &1 
\end{array}\right]
\end{align*}
e
\begin{align*}
R(W)=\left[\begin{array}{cccccc}
0 & 0 &0 &0& 0& 0 \\
1 &1 &2 &1 &0 &0 \\
2 &1 &1 &1 &0 &0 
\end{array}\right]
\end{align*}
the row-module, $ \langle R ( W ) \rangle $ is represented by the following vectors:
$$\begin{array}{|c|c|}\hline
000000 & 112100\\\hline
010100 & 211100\\\hline
020200 & 221200\\\hline
       & 122200\\\hline
       & 201000\\\hline
       & 102000\\\hline
\end{array}$$
where the left are those belonging to $ \langle R ( \mathbb{ S } ) \rangle \cap \langle R ( W ) \rangle $. Then we see that $ \frac { \# \langle R ( W ) \rangle }{ \# \langle R ( W ) \rangle \cap \langle R ( \mathbb{ S } ) \rangle } = K $, then the code is stabilizer by Theorem~\ref{LinearisSta}. Actually, we can see that the code is equivalent to the code $ [ [ 3,1,2 ] ] _3 $ in \cite{Hu2008} with stabilizer $S'=\langle ZXZ, XZ^2X \rangle$.

From Theorem~\ref{LinearisSta}, follows two Corollaries representing results usually found in the literature.
\begin{corollary}\label{col1}
Let $ \mathcal{ Q } $ be a CWS code with stabilizer $ S = \langle s_1, \ldots, s_r \rangle $ and \textit{codewords} $ W = \{ w_i \}_{ i = 1 }^{ K } $ forming a group. Then $ \mathbb{ Q } $ is a stabilizer code.
\end{corollary}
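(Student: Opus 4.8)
The plan is to reduce everything to the criterion of Theorem~\ref{LinearisSta}. Since $w_1=I$ is assumed, it suffices to show that the hypothesis ``$W$ forms a group'' forces
$$\frac{\#\langle R(W)\rangle}{\#(\langle R(W)\rangle\cap\langle R(\mathbb{S})\rangle)}=K.$$
I would establish this by computing the numerator and the denominator separately: I claim $\#\langle R(W)\rangle=K$ and $\#(\langle R(W)\rangle\cap\langle R(\mathbb{S})\rangle)=1$.

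For the numerator, I would use that $R$ is a group homomorphism carrying products to sums and annihilating global phase. Hence, whenever $W$ is closed under multiplication (up to phase), the set $R(W)$ is automatically closed under addition, negation and $\mathbb{Z}_d$-scaling, so it is already a submodule and coincides with the row-module it generates, $\langle R(W)\rangle=R(W)$. To count it, note the codewords are pairwise distinct modulo phase because $\{w_i\dirac{\psi}\}_{i=1}^K$ is a basis: if $R(w_i)=R(w_j)$ then $R(w_i^{\dagger}w_j)=\mathbf{0}$, so $w_i^{\dagger}w_j=\alpha I$ and the vectors $w_i\dirac{\psi},w_j\dirac{\psi}$ would be linearly dependent, forcing $i=j$. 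Thus $R$ is injective on $W$ and $\#\langle R(W)\rangle=K$.

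For the denominator, I would show the intersection is trivial. Take $\mathbf{v}\in\langle R(W)\rangle\cap\langle R(\mathbb{S})\rangle$. Since $\mathbf{v}\in\langle R(W)\rangle=R(W)$, there is an actual group element $w\in W$ with $R(w)=\mathbf{v}$. Since $\mathbf{v}\in\langle R(\mathbb{S})\rangle$ and, up to phase, $S$ is a maximal abelian set (Corollary~\ref{maximal}), the surjectivity argument used in Lemma~\ref{lemaLinearisSta} yields $g\in S$ with $R(g)=\mathbf{v}$. Then $R(w^{\dagger}g)=\mathbf{0}$, so $w^{\dagger}g=\alpha I$ and $g=\alpha w$. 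As $g$ stabilizes $\dirac{\psi}$, the vector $w\dirac{\psi}$ is proportional to $\dirac{\psi}=w_1\dirac{\psi}$; linear independence of the basis together with $w_1=I\in W$ then force $w=w_1=I$, whence $\mathbf{v}=R(I)=\mathbf{0}$. Therefore $\langle R(W)\rangle\cap\langle R(\mathbb{S})\rangle=\{\mathbf{0}\}$ has cardinality $1$.

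Combining the two computations gives ratio $K/1=K$, so Theorem~\ref{LinearisSta} applies and $\mathcal{Q}$ is a stabilizer code. I expect the main obstacle to be the triviality of the intersection: the delicate point is to extract simultaneously a genuine group element $w\in W$ and some $g\in S$ representing the same vector $\mathbf{v}$, and then to convert the resulting phase relation into the contradiction with linear independence of the basis. The closure and counting of $\langle R(W)\rangle$ should be routine once one exploits that $R$ is a phase-killing homomorphism.
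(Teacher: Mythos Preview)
Your proposal is correct and follows exactly the paper's approach: verify the criterion of Theorem~\ref{LinearisSta} by showing $\#\langle R(W)\rangle=K$ (since $W$ a group makes $R(W)$ an additive subgroup, hence already the row-module) and $\langle R(W)\rangle\cap\langle R(\mathbb{S})\rangle=\{\mathbf{0}\}$. The paper dispatches the second step in one line (``by the construction of CWS codes''), whereas you spell out the argument via $g=\alpha w$ and linear independence of the basis $\{w_i\dirac{\psi}\}$; this is a welcome elaboration rather than a different route.
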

\begin{proof}
If $ W $ is a group, then the rows of $ R ( W ) $ also form an additive group, then $ \# \langle R ( W ) \rangle = \# W = K $. Moreover, we have by the construction of CWS codes that, $ \langle R ( W ) \rangle \cap \langle R ( \mathbb{ S } ) \rangle = \{ 0 \} $, so $ \frac{ \# \langle R ( W ) \rangle }{ \# \langle R ( W ) \rangle \cap \langle R ( \mathbb{ S } ) \rangle } = K $
\end{proof}

\begin{corollary}\label{col2}
Let $ \mathbb{ Q } $ be a CWS code with stabilizer $ S = \langle s_1, \ldots, s_r \rangle $, \textit{codewords} $ W = \{ w_i \}_{ i = 1 }^{ K } $ with $ w_1 = I $ and stabilized state $ \dirac{\psi}$. If the classic words $ Cl_S( W ) $ form a group, then the code is a stabilizer one.
\end{corollary}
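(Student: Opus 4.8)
The plan is to deduce the numerical criterion of Theorem~\ref{LinearisSta} directly from the hypothesis, namely to show that $ Cl_S(W) $ being a group forces $ \frac{\#\langle R(W)\rangle}{\#(\langle R(W)\rangle \cap \langle R(\mathbb{S})\rangle)} = K $. The starting observation is that $ Cl_S $ factors through the parity check representation: the assignment $ v \mapsto R(\mathbb{S})\Lambda v^T $ is a $ \mathbb{Z}_d $-module homomorphism $ \phi : \mathbb{Z}_d^{2n} \rightarrow \mathbb{Z}_d^r $ satisfying $ Cl_S(P) = \phi(R(P)) $ for every Pauli operator $ P $. Since $ R $ is a surjective homomorphism and $ S $ is, up to phase, a maximal abelian set (Corollary~\ref{maximal}), the computation already carried out inside the proof of Theorem~\ref{LinearisSta} identifies $ \ker\phi = Ker(R(\mathbb{S})\Lambda) = \langle R(\mathbb{S})\rangle $.

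Next I would restrict $ \phi $ to the row-module $ \langle R(W)\rangle $. By linearity $ \phi(\langle R(W)\rangle) = \langle Cl_S(W)\rangle $, so the first isomorphism theorem (Theorem~\ref{isomorphismstheorems}) gives
\[
\langle Cl_S(W)\rangle \simeq \frac{\langle R(W)\rangle}{\langle R(W)\rangle \cap \langle R(\mathbb{S})\rangle},
\]
whence $ \#\langle Cl_S(W)\rangle = \frac{\#\langle R(W)\rangle}{\#(\langle R(W)\rangle \cap \langle R(\mathbb{S})\rangle)} $. This is exactly the left-hand side of the criterion in Theorem~\ref{LinearisSta}, so the whole problem is reduced to proving $ \#\langle Cl_S(W)\rangle = K $.

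Here the hypothesis enters twice. First, if $ Cl_S(W) $ is a group then it is an additive subgroup of $ \mathbb{Z}_d^r $, hence automatically a $ \mathbb{Z}_d $-submodule, so the module it generates coincides with the set itself and $ \#\langle Cl_S(W)\rangle = \# Cl_S(W) $. Second, I would invoke the structural meaning of the codewords: since $ |S| = d^n $, the joint eigenspaces of $ S $ are one-dimensional (Theorem~\ref{STA1}) and each $ w_i\dirac{\psi} $ sits in the eigenspace determined by $ Cl_S(w_i) $; as $ \{w_i\dirac{\psi}\}_{i=1}^K $ is by definition a basis of the $ K $-dimensional code, the labels $ Cl_S(w_i) $ must be pairwise distinct, giving $ \# Cl_S(W) = K $. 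Combining these with the isomorphism above yields the required equality and, by Theorem~\ref{LinearisSta}, proves that $ \mathbb{Q} $ is a stabilizer code.

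The main obstacle is not the module-theoretic bookkeeping, which is routine given the earlier results, but the counting step $ \# Cl_S(W) = K $: it is the only place where one must translate the abstract hypothesis back into the geometry of the code and argue that distinct basis vectors $ w_i\dirac{\psi} $ force distinct classical labels. Care is also needed to confirm that the group structure of $ Cl_S(W) $, rather than merely of $ W $ or of $ R(W) $, is what guarantees $ \langle Cl_S(W)\rangle = Cl_S(W) $, so that no spurious enlargement inflates the left-hand side of the criterion.
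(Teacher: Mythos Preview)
Your argument is correct and follows essentially the same route as the paper: both reduce to the criterion of Theorem~\ref{LinearisSta} via the module homomorphism $\phi=R(\mathbb{S})\Lambda$ with $\ker\phi=\langle R(\mathbb{S})\rangle$, and both use the group hypothesis on $Cl_S(W)$ to identify the image of $\langle R(W)\rangle$ with $Cl_S(W)$ itself. Your version packages the coset computation through the first isomorphism theorem and makes the step $\#Cl_S(W)=K$ explicit via the one-dimensionality of the joint eigenspaces, whereas the paper carries out the same bookkeeping by an elementwise decomposition $r_w=R(w_j)+r_s$; the underlying idea is identical.
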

\begin{proof}
To show that $ \frac{ \# \langle R ( W ) \rangle }{ \# \langle R ( W ) \rangle \cap \langle R ( \mathbb{S} ) \rangle } = K $ is enough to show that every element $ r_w \in \langle R ( W ) \rangle $ is of the form $ r_w = R ( w_j ) + r_s$ with $ r_s \in \langle R ( \mathbb{ S } ) \rangle $. The transformation $Cl_S$ has domain in $ \mathcal{G}_d^n $. each element of $ \mathcal{G}_d^n $ has a representation on $ \mathbb{Z}_{d}^{2n} $. As already seen (equation~\ref{funcaocls}), we can describe the transformation  $Cl_S $ over $ \mathbb{ Z }_{ d }^{ 2n } $ as a homomorphism of modules represented by the matrix $ \mathcal{T} = R ( \mathbb{S} ) \Lambda $.

Take then $ r_w \in \langle R ( W ) \rangle $, so $r_w = \alpha_1 R ( w_1 ) + \ldots + \alpha_kR( w_k ) $ and
\begin{align*}
\mathcal{T}(r_w)&=\alpha_1\mathcal{T}(R(w_i))+\ldots +\alpha_k\mathcal{T}(R(w_k))\\
&=\alpha_1c_1+\ldots +\alpha_kc_k.
\end{align*}
As $Cl_S(W) $ form a group, the last summation is $\mathcal{T}(R(w_j))=c_j\in Cl_S(W)$, e.g $ \mathcal{T}( r_w ) = \mathcal{T}( R ( w_j ) ) $ so
$$ r_w = R ( w_j ) + r_s $$
where $ r_s \in Ker( \mathcal{T} ) = \langle R(\mathbb{S} ) \rangle $.
\end{proof}

It also follows that any stabilizer code can be seen as a CWS code, as shown in the following theorem
\begin{theorem}
All stabilizer code  $ \mathcal{Q} $ is a CWS code.
\end{theorem}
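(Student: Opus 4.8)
The plan is to realize the given stabilizer code $\mathcal{Q}$ in the CWS form of Section~\ref{structure}: I must produce a maximal abelian group $S$ of order $d^n$ stabilizing a single state $\dirac{\psi}\in\mathcal{Q}$, together with codeword operators $W=\{w_i\}_{i=1}^{K}$ with $w_1=I$ such that $\{w_i\dirac{\psi}\}$ is a basis of $\mathcal{Q}$. Let $H$ be the abelian stabilizer group defining $\mathcal{Q}$, with generating set $\mathbb{H}$ and containing no multiple of the identity other than the identity itself; by Theorem~\ref{STA1} we have $\dim\mathcal{Q}=K=d^{n}/|H|$.

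First I would enlarge $H$. Applying Lemma~\ref{lematamanho1} repeatedly, I extend $H$ to an abelian group $S\supseteq H$ with $|S|=d^{n}$ still containing no multiple of the identity other than the identity itself. By Corollary~\ref{COLCARD} such an $S$ stabilizes a single state $\dirac{\psi}$, and since $H\subseteq S$ every element of $H$ fixes $\dirac{\psi}$, so $\dirac{\psi}\in\mathcal{Q}$. This settles the stabilizer part of the CWS data.

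Next I would build $W$ from the module structure. Up to phase, the operators commuting with $H$ correspond to $Ker(R(\mathbb{H})\Lambda)$, and by the first isomorphism theorem together with Proposition~\ref{cardinalidades} its cardinality is $d^{2n}/\#\langle R(\mathbb{H})\rangle=d^{n}K$. Since $S$ is abelian and contains $H$, relation~\ref{comut3} gives $\langle R(\mathbb{S})\rangle\subseteq Ker(R(\mathbb{H})\Lambda)$, with $\#\langle R(\mathbb{S})\rangle=d^{n}$, so the quotient $\mathbb{Z}_d$-module $Ker(R(\mathbb{H})\Lambda)/\langle R(\mathbb{S})\rangle$ has exactly $K$ elements. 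I would pick coset representatives $\mathbf{v}_1=\mathbf{0},\mathbf{v}_2,\ldots,\mathbf{v}_K$ and, using surjectivity of $R$, lift them to Pauli operators $w_1=I,w_2,\ldots,w_K$, setting $W=\{w_i\}$. Because each $\mathbf{v}_i\in Ker(R(\mathbb{H})\Lambda)$, relation~\ref{comut3} shows $w_i$ commutes with all of $H$, hence $h\,w_i\dirac{\psi}=w_i\,h\dirac{\psi}=w_i\dirac{\psi}$ for every $h\in H$, so $w_i\dirac{\psi}\in\mathcal{Q}$.

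The hard part will be showing that the $K$ vectors $\{w_i\dirac{\psi}\}$ are linearly independent, so that they actually form a basis of the $K$-dimensional space $\mathcal{Q}$. For $i\neq i'$ one has $R(w_{i'}^{\dagger}w_i)=\mathbf{v}_i-\mathbf{v}_{i'}\notin\langle R(\mathbb{S})\rangle$, so $w_{i'}^{\dagger}w_i\notin S$; by the maximality of $S$ as an abelian set up to phase (Corollary~\ref{maximal}) there is some $s\in S$ with $R(s)\Lambda R^{T}(w_{i'}^{\dagger}w_i)\neq 0$. Using $s\dirac{\psi}=\dirac{\psi}$ and relation~\ref{comut3}, each $w_i\dirac{\psi}$ is an eigenvector of $s$ with eigenvalue $q_d^{R(s)\Lambda R^{T}(w_i)}$, and the two eigenvalues differ precisely because their exponents differ by $R(s)\Lambda R^{T}(w_{i'}^{\dagger}w_i)\neq 0$. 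Eigenvectors of the unitary $s$ associated with distinct eigenvalues are orthogonal, so the $\{w_i\dirac{\psi}\}$ are pairwise orthogonal, hence a basis of $\mathcal{Q}$. This exhibits $\mathcal{Q}$ as the CWS code with data $(S,W)$, completing the argument.
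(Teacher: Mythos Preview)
Your argument is correct and follows the same overall plan as the paper: extend the given stabilizer $H$ to a maximal abelian group $S$ of order $d^n$ (via Lemmas~\ref{lematamanho1}--\ref{lematamanho2}), take the unique stabilized state $\dirac{\psi}\in\mathcal{Q}$, and then locate $K$ Pauli operators producing a basis of $\mathcal{Q}$.

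The one genuine difference is in how the codeword operators are parametrized. The paper works on the \emph{image} side: it observes that $\#Im(R(\mathbb{S})\Lambda)=d^n$, so the syndrome vectors $\mathbf{x}\in Im(R(\mathbb{S})\Lambda)$ index a decomposition $\mathcal{H}_d^n=\bigoplus_{\mathbf{x}}P_{\mathbf{x}}\dirac{\psi}$ into one-dimensional simultaneous eigenspaces of $\mathbb{S}$, and then simply selects the $K$ lines lying in $\mathcal{Q}$ (those whose syndrome vanishes on the first $m$ coordinates). You work on the \emph{kernel} side: you compute $\#\big(Ker(R(\mathbb{H})\Lambda)/\langle R(\mathbb{S})\rangle\big)=K$ directly and take coset representatives, then supply an explicit orthogonality argument using Corollary~\ref{maximal} to establish linear independence. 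The two parametrizations are dual to each other; your route is a bit more self-contained (the independence of the $w_i\dirac{\psi}$ is proved on the spot rather than referenced from the discussion preceding Theorem~\ref{STA1}), while the paper's route makes the geometric picture of the full eigenspace decomposition more visible.
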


\begin{proof}
Let $ S' = \langle s_1, \ldots, s_m \rangle $ be the stabilizer group of the code $ \mathcal{ Q } $  and let $ dim(\mathcal{ Q } ) = K $. As already discussed, $ S'$ can then be extended to a maximal group $ S = \langle s_1, \ldots, s_m, g_1, \ldots, g_L \rangle $ with cardinality $ | S | = d^n $. This group stabilizes a single state $ \dirac{\psi} \in \mathcal{ Q } $. Consider now the parity check matrix $ R ( \mathbb{S}) $. We have $ \# \langle R ( \mathbb{S} ) \rangle = d^n $. As the cardinality of the row-module is the same of the column-module, we have $ \# Im( R ( \mathbb{S} ) ) = d^n $ and in turn also $ \# Im(R(\mathbb{ S } ) \Lambda ) = d^n $. This equality implies that for every $ \mathbf{ x } \in Im( R ( \mathbb{ S }) \Lambda ) $, there is a Pauli operator $ P_\mathbf{ x } $ such that $ \mathcal{ H }_{ d }^{ n } = \bigoplus P_\mathbf{ x } \dirac{\psi} $ and each state $ P_\mathbf{ x } \dirac{\psi} $ is the intersection of the eigenspaces associated with eigenvalues $q_d^{x_i}$ for each generator of $ S $. Since $ \mathcal{ Q } $ is a stabilizer code and $ dim ( \mathcal{ Q } ) = K $   we know that there are $K$ of these Pauli operators forming a set $ W = \{ P_{ x_i } \}_{ i = 1 }^{ K } $ that form a basis for $ \mathcal{ Q } $. Then just take the set $ W $ as \textit{codewords}
\end{proof}

\section{Conclusion}

In Section~\ref{structure}, was demonstrated for qudits, that a stabilizer group $ S $ of order $ | S | $ stabilizes a subspace of $ \mathcal{ H }_d^n$ of dimension $ \frac{d^n}{ | S | } $. Although there is already a demonstration of this result, we created a proof which generalizes the one for qubits contained in \cite{Nielsen2000} and makes use of the parity check matrix of Definition~\ref{MatrizV} and the interpretation of the matrix $ R ( \mathbb{ S } ) \Lambda $ as a homomorphism of $ \mathbb{ Z }_d $-modules.

In Section~\ref{CWSestabilizadores} we use the parity check matrix $ R ( \mathbb{S} ) $ and the interpretation of the matrix $ R ( \mathbb{ S } ) \Lambda $ as a homomorphism from $ \mathbb{ Z }_d $-modules to prove Theorem~\ref{LinearisSta} which generalizes the results contained in Corollaries~(\ref{col1} and \ref{col2}). These corollaries are accepted results in the literature, but hard to find for qudits.

% conference papers do not normally have an appendix

% use section* for acknowledgement
\section*{Acknowledgment}

The authors would like to thank FAPEMIG.

% trigger a \newpage just before the given reference
% number - used to balance the columns on the last page
% adjust value as needed - may need to be readjusted if
% the document is modified later
%\IEEEtriggeratref{8}
% The "triggered" command can be changed if desired:
%\IEEEtriggercmd{\enlargethispage{-5in}}

% references section

% can use a bibliography generated by BibTeX as a .bbl file
% BibTeX documentation can be easily obtained at:
% http://www.ctan.org/tex-archive/biblio/bibtex/contrib/doc/
% The IEEEtran BibTeX style support page is at:
% http://www.michaelshell.org/tex/ieeetran/bibtex/
%\bibliographystyle{IEEEtran}
% argument is your BibTeX string definitions and bibliography database(s)
%\bibliography{IEEEabrv,../bib/paper}
%
% <OR> manually copy in the resultant .bbl file
% set second argument of \begin to the number of references
% (used to reserve space for the reference number labels box)
%\begin{thebibliography}{1}

\bibliographystyle{IEEEtran}
\bibliography{quantum-codes}

%\bibitem{IEEEhowto:kopka}
%H.~Kopka and P.~W. Daly, \emph{A Guide to \LaTeX}, 3rd~ed.\hskip 1em plus
%  0.5em minus 0.4em\relax Harlow, England: Addison-Wesley, 1999.

%\end{thebibliography}

% that's all folks
\end{document}